\documentclass[a4paper,UKenglish,thm-restate]{article}
\pdfoutput=1

\usepackage{amssymb,amsthm,amsmath,authblk,geometry,xcolor,mathtools}
\usepackage[ocgcolorlinks,colorlinks=true,citecolor=blue,linkcolor=black,urlcolor=blue]{hyperref}

\title{Intersection Patterns of Set Systems on Manifolds with Slowly Growing Homological Shatter Functions}

\author{
  Sergey Avvakumov\\
  School of Mathematical Sciences\\
  Tel Aviv University\\
  Tel Aviv 69978, Israel\\
  \and 
  Marguerite Bin\\
  Universit\'e de Lorraine, CNRS, INRIA\\
  LORIA,  F-54000 Nancy, France\\
  \and
  Xavier Goaoc\\
  Universit\'e de Lorraine, CNRS, INRIA\\
  LORIA,  F-54000 Nancy, France\\
  }

\newcommand{\R}{\mathbb{R}}
\newcommand{\Z}{\mathbb{Z}}
\newcommand{\N}{\mathbb{N}}
\newcommand{\s}{\mathbb{S}}

\newcommand{\p}{\mathcal{P}}
\newcommand{\C}{\mathcal{C}}
\newcommand{\F}{\mathcal{F}}
\newcommand{\G}{\mathcal{G}}
\newcommand{\M}{\mathcal{M}}

\DeclareMathOperator{\h}{h}
\DeclareMathOperator{\conv}{conv}
\DeclareMathOperator{\rad}{r}
\DeclareMathOperator{\ch}{ch}
\DeclareMathOperator{\fh}{fh}
\DeclareMathOperator{\supp}{supp}
\DeclareMathOperator{\Csing}{C_*^{sing}}
\DeclareMathOperator{\hc}{hc}
\DeclareMathOperator{\im}{im}
\DeclareMathOperator{\sd}{sd}

\newcommand{\card}[1]{\textrm{Card}\pth{#1}}
\newcommand{\defn}[1]{\emph{\color{violet}#1}}
\newcommand{\set}[1]{\left\{#1\right\}}
\newcommand{\pth}[1]{\left(#1\right)}

\newtheorem{replemma}{Lemma}
\newtheorem{theorem}{Theorem}
\newtheorem{conjecture}[theorem]{Conjecture}
\newtheorem{corollary}[theorem]{Corollary}
\newtheorem{lemma}{Lemma}[section]
\newtheorem{proposition}[lemma]{Proposition}

\begin{document}

\maketitle

\begin{abstract}
  A theorem of Matou\v sek asserts that for any $k \ge 2$, any set system whose shatter function is $o(n^k)$ enjoys a fractional Helly theorem of order $k$: in the $k$-wise intersection hypergraph, positive density implies a linear-size clique. Kalai and Meshulam conjectured a generalization of that phenomenon to \emph{homological shatter functions}. It was verified for set systems with bounded homological shatter functions and whose ground set has a \emph{forbidden homological minor} (which includes $\R^d$ by a homological analogue of the van Kampen-Flores theorem). We present two contributions to this line of research:

  \begin{itemize}
  \item We study homological minors in certain manifolds (possibly with boundary), for which we prove analogues of the van Kampen-Flores theorem and of the Hanani-Tutte theorem.\medskip
  \item We introduce graded analogues of the Radon and Helly numbers of set systems and relate their growth rate to the original parameters. This allows to extend the verification of the Kalai-Meshulam conjecture to sufficiently slowly growing homological shatter functions.
  \end{itemize}
\end{abstract}

\section{Introduction}

A classical line of research in discrete geometry investigates generalizations of properties of convex sets beyond convexity, with a particular attention to topological conditions. At least three distinct lines of inquiry emerged:

\begin{description}
\item[(A)] Generalizing families of convex sets into \emph{acyclic/good covers}, meaning set systems in topological spaces such that every subfamily has empty or homologically/homotopically trivial intersection; an early example is Helly's topological theorem~\cite{helly1930systeme}, see also the survey of Tancer~\cite{tancer2012intersection} for an overview.\medskip

\item[(B)] Reformulating properties of convex sets of $\R^d$ as properties of linear maps into $\R^d$ and investigating their generalizations to \emph{continuous} maps into $\R^d$, typically via theorems of Borsuk-Ulam type; an early example is the topological Radon theorem of Bajmóczy and Bárány~\cite{bajmoczy1979common} and more examples can be found \emph{e.g.} in the survey of Bárány and Soberón~\cite{barany2018tverberg}.\medskip

\item[(C)] Analyzing set systems whose nerve enjoy properties of nerves of convex sets, like \emph{$d$-collapsibility} or \emph{$d$-Lerayness}; an early example is the sharpening of the Fractional Helly theorem by Kalai~\cite{kalai1984intersection} and the work of Alon et al.~\cite{transversal-hypergraph} establishes several landmark results.
\end{description}

\noindent
Two decades ago, Kalai and Meshulam~\cite{KM} proposed conjectures relating approaches~(A) and~(C), towards what they called a theory of \emph{homological VC dimension}. First, in order to generalize the notion of \emph{good cover}, let us measure the complexity of the intersection patterns of a set system $\F$ in a topological space by its \defn{$h$th homological shatter function}

\begin{equation}\label{eq:homshat}
  \phi_{\F}^{(h)}\colon\left\{\begin{array}{rcl}
        \mathbb{N} & \rightarrow & \mathbb{N} \cup \{\infty\} \\
         k& \mapsto & \sup \left\{\left.\tilde{\beta_i}\pth{\bigcap\limits_{F\in \mathcal{G}} F;\Z_2} \ \right| \ \G\subset \F, |\G|\leq k, 0\leq i\leq h\right\}.
    \end{array}\right.
\end{equation}

\noindent
Here $h$ is some fixed parameter, and $\tilde\beta_i(\cdot;\Z_2)$ is the $i$th reduced Betti number with coefficients in $\Z_2$. (The set systems $\F$ with $\phi_{\F}^{(\infty)} \equiv 0$ are the acyclic covers and include good covers and convex sets.) The  conjectures are about nerves of set systems whose intersection patterns have polynomially-growing topological complexity. In particular, their combination~\cite[Conjectures~6~and~7]{KM} (see also~\cite[Conjecture~1.9]{steppingUp}) implies that polynomially-growing homological shatter functions give rise to a ``\emph{positive density implies big clique}'' phenomenon:

\begin{conjecture}[Kalai and Meshulam]
    \label{conj:kalai-meshulam}
    For any $d \in \N$ and any function $\Psi: \N \to \N$ such that $\Psi(n) = O(n^d)$, there exists $\beta: (0,1) \to (0,1)$ such that the following holds. For any $\alpha>0$ and any set system $\F$ in $\R^d$ with $\phi_{\F}^{(d)} \le \Psi$, if a proportion $\alpha$ of the $(d+1)$-element subsets of $\F$ have nonempty intersection, then some $\beta(\alpha)|\F|$ members of $\F$ have a point in common.
\end{conjecture}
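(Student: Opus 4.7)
The plan is to follow the Kalai--Meshulam programme: reduce the fractional Helly conclusion to a (possibly graded) $d$-Lerayness property of the nerve of $\F$, and then establish this $d$-Lerayness by combining an ambient topological obstruction in $\R^d$ of van Kampen--Flores type with a quantitative control of Radon- and Helly-type numbers in terms of the homological shatter function. Concretely, the three stages are: (i) a homological van Kampen--Flores theorem in $\R^d$; (ii) a translation of polynomial bounds on $\phi_\F^{(d)}$ into graded bounds on homological Radon and Helly numbers of $\F$; (iii) an application of Kalai's fractional Helly theorem for $d$-Leray complexes.

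For stage~(i), I would seek a statement of the form: for any ``chain map'' from the boundary of a sufficiently large simplex into singular chains of $\R^d$ with prescribed carriers, the carriers of two disjoint faces of complementary dimension must have a common point, producing a homological Radon partition. The proof would follow the spirit of the classical van Kampen--Flores obstruction, by an equivariant argument at the $\Z_2$-chain level; the point of working homologically rather than with continuous maps is that the obstruction will then feed directly into bounds on reduced Betti numbers of intersections.

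For stage~(ii), I would introduce graded analogues of the Radon and Helly numbers of $\F$, each indexed by a parameter that tracks how many successive intersection Betti numbers of subfamilies of $\F$ can exceed a given threshold. The homological van Kampen--Flores obstruction, applied inside $\R^d$, provides the control: whenever a subfamily fails a Radon-type partition, a large reduced Betti number is forced at some specified level of the grading, and the polynomial shatter bound $\phi_\F^{(d)}(k) = O(k^d)$ caps how often this can happen. Iterating along the grading should turn the shatter-function bound into a finite bound on the graded Helly number, in the spirit of the classical VC-theoretic reduction but now on the homological side.

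The final stage is then classical: a bounded graded Helly number translates into $d$-Lerayness of the nerve (or a mildly quantitative weakening thereof), and Kalai's fractional Helly theorem yields $\beta(\alpha)$ as a function only of $\alpha$, $d$ and the implicit constant in $\Psi$. The main obstacle I expect is stage~(ii): a \emph{bounded} shatter function would allow a single direct use of the van Kampen--Flores obstruction, as in earlier verifications of the conjecture, but with polynomial growth one must invoke the obstruction iteratively along the grading and ensure that the Betti-number error terms incurred at each step do not overflow $\Psi(n)$ before the induction closes. This is presumably what restricts how close the growth exponent of $\Psi$ can be pushed towards $d$, and it explains the ``sufficiently slowly growing'' regime advertised in the abstract.
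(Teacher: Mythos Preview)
The statement you are addressing is a \emph{conjecture}, not a theorem: the paper does not prove Conjecture~\ref{conj:kalai-meshulam} in the stated generality of polynomially growing $\Psi$, and explicitly presents it as open. What the paper does prove is the special case where $\Psi$ grows extremely slowly (Corollary~\ref{c:itgrows!} via Theorem~\ref{t:radgrowth}), extending the previously known bounded case. So there is no ``paper's own proof'' to compare against; your proposal is a sketch toward an open problem.

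On the substance of your sketch: stage~(i) is indeed available (Theorem~\ref{t:homological-vKF}). Stage~(iii), however, is not how the paper proceeds even in the restricted cases it handles: the paper does not establish $d$-Lerayness of the nerve, but instead bounds the Radon number (via Pat\'akov\'a's theorem), then invokes Holmsen--Lee to bound the fractional Helly number, and finally uses the stepping-up result of Goaoc--Holmsen--Pat\'akov\'a to bring that bound down to $d+1$. Your route through $d$-Lerayness and Kalai's theorem is a different architecture and would need its own justification.

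The genuine gap is in stage~(ii), and you essentially acknowledge it in your last paragraph. The paper's mechanism (Lemma~\ref{l:radongrowth} and Theorem~\ref{t:radgrowth}) shows that if the graded Radon number grows slower than $\log_2 t$ then it is bounded; combined with Pat\'akov\'a's bound $\rad \le \Psi^{(K)}_{\hc\to\rad}(\phi^{(h)}_\F(t))$, this only handles shatter functions for which $\Psi^{(K)}_{\hc\to\rad}\circ\Psi$ is sub-logarithmic---vastly slower than polynomial. Your proposed ``iterate the obstruction along the grading'' step does not supply a concrete mechanism by which polynomial growth of $\phi_\F^{(d)}$ would force bounded Radon or Helly numbers, and no such mechanism is currently known. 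That is precisely the content of the conjecture.
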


\noindent
Such conditions that a positive density of $d$-faces in the nerve implies the existence of a linear-size face is called a \emph{fractional Helly theorem} (see Section~\ref{s:cbk}).

\medskip

Conjecture~~\ref{conj:kalai-meshulam} is a topological analogue of a theorem of Matou\v sek~\cite{matousek2004bounded} which asserts that every set system with polynomial (combinatorial) shatter function enjoys a fractional Helly theorem. Families of semi-algebraic sets of bounded complexity in $\R^d$ are an interesting example at the intersection of these two realms: classical results on the Betti numbers of algebraic varieties (see~\emph{e.g.}~\cite{basu2018multi} and the references therein) ensure that both their combinatorial shatter functions and their homological shatter functions are polynomial.

\medskip

Conjecture~\ref{conj:kalai-meshulam} was confirmed for functions $\Psi$ that are bounded (\cite[Corollary~1.3]{steppingUp}, building on~\cite{Holmsen2020,Holmsen2021,patakova2022bounding}). The main contribution of the present paper is to extend that confirmation to some diverging homological shatter functions and to set systems on certain manifolds.

\subsection{Context and motivation}\label{s:context}

Before we state our results precisely (in Section~\ref{s:results}) let us provide some context and motivation, as well as introduce some necessary terminology.

\subsubsection{Combinatorial background: convexity parameters of set systems}\label{s:cbk}

The classical theorems of Helly, Radon and Carathéodory have initiated a rich theory of the combinatorial properties of convexity, whose landmarks include the centerpoint theorem, Tverberg's theorem, the colorful Helly and Carathéodory theorems, the fractional Helly theorem, the selection lemma, the weak $\varepsilon$-net theorem, the $(p,q)$-theorem, etc. We refer the interested reader to the monograph of Bárány~\cite{barany2021combinatorial} and the textbook of Matou\v{s}ek~\cite{matousek2013lectures}. These classical convexity theorems have algorithmic consequences for instance in optimization and geometric data analysis~\cite[$\mathsection~6-7$]{de2019discrete} or in property testing~\cite{chakraborty2018helly}, and one motivation for their extension beyond the convex setting is that several of these benefits generalize as well~\cite{msw-sblp-92,epstein_et_al:LIPIcs.ICALP.2020.98}.

\medskip

We can associate to any set system $\F$ with ground set $X$ some parameters inspired by convexity properties, for instance:

\begin{itemize}
\item The \defn{Helly number} $\h_\F$ of $\F$ is the smallest integer $h$ with the following property: If in a finite subfamily $\G \subset\F$, every $h$ members of $\G$ intersect, then $\G$ has nonempty intersection. If no such $h$ exists, we set $\h_\F=\infty$.\medskip

\item The \defn{Radon number} $\rad_\F$ of $\F$ is the smallest integer $r$ such that every $r$-element subset $S\subset X$ can be partitioned into two nonempty parts $S=P_1\sqcup P_2$ such that $\conv_{\F}(P_1) \cap \conv_{\F}(P_2) \neq \emptyset$. (The set $\conv_{\F}(P)$, the \emph{$\F$-convex hull} of a subset $P\subset X$, is the intersection of all the members of $\F$ that contain $P$.) If no such $r$ exists, we set $\rad_\F=\infty$.
\end{itemize}

\noindent
Hence, letting $\C_d$ denote the set of all halfspaces in $\R^d$, Radon's lemma asserts that $\rad_{\C_d} = d+2$ and Helly theorem that $\h_{\C_d} = d+1$. A classical result by Levi~\cite{Levi_radon_implies_helly} asserts that for every set system $\F$ we have $\h_{\F} \leq \rad_\F-1$. (This is often stated for convexity spaces but it holds for set systems, see Appendix~\ref{app:convspaces}.) Similar relations between such parameters have been investigated over the years, like for instance the partition conjecture of Eckhoff~\cite{ECKHOFF200061} refuted by Bukh~\cite{bukh2010radon}.

\medskip

Conjecture~\ref{conj:kalai-meshulam} pertains to a parameter inspired by the \emph{fractional Helly theorem}~\cite{katchalski1979problem,kalai1984intersection}, which asserts that in $(d+1)$-wise intersection hypergraphs of convex sets of $\R^d$, positive density implies a linear-size clique. Here is the associated parameter:

\begin{itemize}
\item The \defn{fractional Helly number} $\fh_\F$ of $\F$ is the smallest integer $s$ such that there exists a function $\beta_\F:(0,1)\rightarrow (0,1)$ with the following property: For every finite subfamily $\F'\subset \F$, whenever a fraction $\alpha$ of the $s$-tuples of $\F'$ have nonempty intersection, a subset $\G$ of $\F'$ of size $\beta_\F(\alpha)|\F'|$ has nonempty intersection.
\end{itemize}

\noindent
Again, the fractional Helly theorem states that $\fh_{\C_ d} = d+1$. The significance of the fractional Helly number was highlighted by Alon et al.~\cite{transversal-hypergraph}, who proved that intersection-closed set systems with bounded fractional Helly number enjoy a \emph{weak $\varepsilon$-net theorem}, a \emph{Tverberg-type theorem}, a \emph{selection lemma}, etc. It is tempting to reformulate Conjecture~\ref{conj:kalai-meshulam} as

\begin{quote}
  \emph{For any function $\Psi: \N \to \N$ such that $\Psi(n) = O(n^d)$, every set set system $\F$ in $\R^d$ such that $\phi_{\F}^{(d)} \le \Psi$ has fractional Helly number at most $d+1$.}
\end{quote}

\noindent We note, however, that this statement is weaker than Conjecture~\ref{conj:kalai-meshulam} in that it does not assert that the function $\beta()$ underpinning the fractional Helly number depends only on $\Psi$ and $d$.

\subsubsection{Homological background: homological minors}

A classical way to extend the theory of planar graphs is to consider embeddings of graphs into surfaces and of simplicial complexes into $\R^d$ or other topological spaces.

\medskip

There is a rich theory of embedding of graphs on surfaces, both structural (a classic being the Heawood inequality~\cite{ringel2012map}) and computational (\emph{e.g.} the use of graph genus for parameterized complexity). Let us mention, in particular, the strong Hanani-Tutte theorem which asserts that a graph is planar if it can be drawn so that every pair of independent edges cross an even number of times. This statement generalizes to the projective plane~\cite{pelsmajer2009strong,de2017direct} but was found to fail in genus~4~\cite{fulek2019counterexample}. 

\medskip

Going to dimension higher than $2$ changes the nature of the problems drastically, already because Fáry's theorem no longer holds. (For every $d \ge 3$ there are simplicial complexes that embed in $\R^d$ piecewise linearly but not linearly.) It is thus sometimes convenient to relax the notion of embedding and work with chain maps, and this was done in particular to analyze intersection patterns~\cite{hellyBetti,patakova2022bounding,steppingUp} using a notion of \emph{homological minors}~\cite{wagner2012minors}.

\medskip

Formally, the \defn{support of a singular chain} is the union of (the images of) the singular simplices with nonzero coefficient in that chain, and the  \defn{support of a simplicial chain} is the subcomplex induced by the simplices with nonzero coefficient in that chain.  We write $\supp(\sigma)$ for the support of a (singular or simplicial) chain $\sigma$. A chain map $a:C_*(K)\to \Csing(X)$ (resp. $a:C_*(K)\to \C_*(T)$) is \defn{nontrivial} if, for every vertex $v$ of $K$, the support of $a(v)$ has odd size. Two faces in a simplicial complex $K$ are \defn{adjacent} if they have at least one vertex in common.  A \defn{homological almost-embedding} of a simplicial complex $K$ into a topological space~$X$ (resp. into another simplicial complex $L$) is a nontrivial chain map $a:C_*(K)\to \Csing(X)$ (resp. $a:C_*(K)\to C_*(L)$) such that any two non-adjacent faces $\sigma,\tau \in K$ have images with disjoint support, that is $\supp(a(\sigma)) \cap \supp(a(\tau)) = \emptyset$. In particular, for every embedding $f$ the associated chain map $f_\#$ is a homological (almost) embedding.

\medskip

Let $\Delta_N$ denote the $N$-dimensional simplex and for $K$ a simplicial complex let $K^{(t)}$ denote its $t$-dimensional skeleton. It turns out that there is a homological version of the van Kampen-Flores theorem (see for instance~\cite[Corollary~14]{hellyBetti}):

\begin{theorem}
  \label{t:homological-vKF}
  For any $d \ge 1$, $\Delta_{d+2}^{(\lceil d/2 \rceil)}$ does not homologically almost embed into $\R^d$.
\end{theorem}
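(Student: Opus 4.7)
The plan is to adapt the classical deleted-join proof of the van Kampen--Flores theorem to the chain-level setting. Suppose, for contradiction, that there exists a homological almost-embedding $a\colon C_*(K)\to \Csing(\R^d)$ of $K := \Delta_{d+2}^{(\lceil d/2 \rceil)}$ into $\R^d$. The first task is to ``square'' $a$ into a $\Z_2$-equivariant chain map between $2$-fold deleted joins. For any two vertex-disjoint faces $\sigma,\tau \in K$ one can form a simplex $\sigma *_\Delta \tau$ of the simplicial deleted join $K *_\Delta K$, and since $\sigma,\tau$ are then non-adjacent in $K$, the chains $a(\sigma)$ and $a(\tau)$ have disjoint supports in $\R^d$. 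Hence the join-of-chains $a(\sigma) *_\Delta a(\tau)$ is a well-defined singular chain in the topological deleted join $\R^d *_\Delta \R^d$. Extending linearly yields a $\Z_2$-equivariant, vertex-nontrivial chain map $b\colon C_*(K *_\Delta K)\to \Csing(\R^d *_\Delta \R^d)$. Composing with the standard $\Z_2$-equivariant Sarkaria-type map $\R^d *_\Delta \R^d \to \s^d$ (antipodal action on $\s^d$) then produces an equivariant nontrivial chain map
$$\Phi\colon C_*(K *_\Delta K; \Z_2) \longrightarrow \Csing(\s^d; \Z_2).$$

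The second step invokes the combinatorial input underlying the classical van Kampen--Flores theorem: the deleted join $K *_\Delta K$ carries a $\Z_2$-equivariant $(d+1)$-cycle $\zeta$ that is not an equivariant boundary. Explicitly, one takes $\zeta$ to be the $\Z_2$-sum of all simplices $\sigma *_\Delta \tau$ with $\sigma \cap \tau = \emptyset$, $|\sigma|,|\tau| \le \lceil d/2 \rceil + 1$, and $|\sigma|+|\tau|=d+2$. A routine computation verifies $\partial \zeta = 0$ and shows that $\zeta$ witnesses $\textrm{ind}_{\Z_2}(K *_\Delta K) \ge d+1$; this is precisely the van Kampen--Flores obstruction and is standard (cf.\ Matou\v{s}ek's book on the Borsuk--Ulam theorem).

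The final step is the Borsuk--Ulam contradiction. Since $\textrm{ind}_{\Z_2}(\s^d) = d$, there is no $\Z_2$-equivariant chain map $C_*(K *_\Delta K; \Z_2) \to \Csing(\s^d; \Z_2)$ mapping $[\zeta]$ to a nonzero equivariant class. But the vertex-nontriviality of $\Phi$ together with its equivariance forces exactly this, yielding the contradiction. In my view the main obstacle is Step~1: one must verify rigorously that the chain-level join formula extends to a genuine chain map (in particular commutes with the boundary operator) on the deleted join, and that the nontriviality-on-vertices condition propagates through both the join construction and the retraction onto $\s^d$. The disjointness of supports granted by the almost-embedding hypothesis is precisely what makes the join well defined in $\R^d *_\Delta \R^d$ and allows the equivariant obstruction theory on $\s^d$ to produce the contradiction.
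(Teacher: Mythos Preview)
The paper does not prove Theorem~\ref{t:homological-vKF}; it quotes it from~\cite[Corollary~14]{hellyBetti}. Your deleted-join sketch is essentially the approach taken in that reference, so in that sense you are aligned with the source the paper relies on.

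One remark on where the difficulty actually lies. You flag Step~1 (the chain-level join) as the main obstacle, but that step is fairly mechanical: the join boundary formula $\partial(c*d)=(\partial c)*d + c*(\partial d)$ over $\Z_2$ goes through because faces of non-adjacent simplices of $K$ remain non-adjacent, so the disjoint-support hypothesis of a homological almost-embedding keeps every $a(\sigma)*a(\tau)$ inside $\R^d *_\Delta \R^d$, and nontriviality (odd-size support on vertices) is just preservation of augmentation, which any chain map respects. The step that genuinely needs a nontrivial external ingredient is Step~3. Your phrasing ``mapping $[\zeta]$ to a nonzero equivariant class'' is imprecise: in ordinary homology $H_{d+1}(\s^d;\Z_2)=0$, so $\Phi_*[\zeta]$ vanishes automatically and there is no contradiction at that level. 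The obstruction lives in \emph{equivariant} homology, or equivalently in the homological $\Z_2$-index. What you need is the chain-level Borsuk--Ulam statement that a nontrivial $\Z_2$-equivariant chain map cannot lower this index, combined with the computation that the homological index of $K*_\Delta K$ is at least $d+1$ while that of $\s^d$ is $d$. This is established in~\cite{hellyBetti,wagner2012minors}, but it is a theorem, not a tautology, and should be invoked explicitly rather than absorbed into a one-line appeal to ``vertex-nontriviality together with equivariance''.
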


\noindent
A simplicial complex $K$ is a \defn{homological minor} of a topological space or simplicial complex~$X$ if there is a homological almost-embedding of $K$ into $X$. Theorem~\ref{t:homological-vKF} thus asserts that $\Delta_{d+2}^{(\lceil d/2 \rceil)}$ is not a homological minor of $\R^{d}$, \emph{i.e.}, that $\R^{d}$ has $\Delta_{d+2}^{(\lceil d/2 \rceil)}$ as \defn{forbidden homological minor}.

\subsubsection{Parameters of set systems with a forbidden homological minor}

Matou\v sek~\cite{m-httucs-97} bounded the Helly number of topological set systems in $\R^d$ in which every subfamily intersects in a bounded number of connected components, all contractible. His approach starts from a set systems in $\R^d$, uses Ramsey theory to build a map from $\Delta_{d+2}^{(\lceil d/2 \rceil)}$  into $\R^d$ that is ``constrained'' by the known intersection patterns of $\F$ so that the intersection forced by the van Kampen-Flores theorem reveals a new intersection in~$\F$.

\medskip

This approach was generalized by Goaoc et al.~\cite{hellyBetti} to set systems in $\R^d$ of bounded $(\lceil d/2 \rceil)$-level topological complexity, where the \defn{$h$-level topological complexity} of $\F$ is the maximum over $\N$ of the homological shatter function $\phi_{\F}^{(h)}$, that is

\begin{equation}\label{eq:topcomp}
  \hc_\F^{(h)} := \max\left\{\max_{0 \le i < h}\left.\tilde{\beta_i}\pth{\bigcap\limits_{A\in \G} A;\Z_2} \ \right| \ \G\subset \F\right\}.
\end{equation}

\noindent
That generalization relied on homological minors and replaced the construction of the ``constrained map'' by the (simpler) construction of a ``constrained chain map''. The only aspect of the method that is specific to $\R^d$ is the use of the forbidden homological minor given by Theorem~\ref{t:homological-vKF}, so the approach readily generalizes to set systems whose ground set is a topological space with a forbidden homological minor, see the discussions in~\cite[$\mathsection 5.2$]{patakova2022bounding} and~\cite[$\mathsection 2.2$]{steppingUp}. This method was refined and extended to analyze other parameters of set systems whose ground set has a forbidden homological minor~\cite{patakova2022bounding,patak2025sharper,steppingUp}.

\subsubsection{Proof of Conjecture~\ref{conj:kalai-meshulam} for bounded homological shatter functions}

Conjecture~\ref{conj:kalai-meshulam} was proven for bounded homological shatter functions in three steps: 

\begin{itemize}
\item First, Holmsen and Lee~\cite[Theorem~1.1]{Holmsen2021} proved that the fractional Helly number of \emph{any} set system can be bounded by a function of its Radon number. Specifically, letting $\Psi_{\rad \to \fh}(x)$ denote the supremum of the fractional Helly number of a set system with Radon number $x$, they proved  that for every $r \ge 3$ we have $\Psi_{\rad \to \fh}(r) \le r^{r^{\lceil \log_2r \rceil}+r\lceil \log_2r \rceil}$.\medskip

\item Second, Patáková~\cite[Theorem~2.1]{patakova2022bounding} proved that the Radon number of \emph{any} set system whose ground set has $K$ as forbidden homological minor can be bounded by a function of its $(\dim K)$-level topological complexity.\medskip

\item Third, Goaoc, Holmsen and Patáková~\cite[Theorem~1.2]{steppingUp} proved that for every set systems $\F$ whose ground set has $K$ as forbidden homological minor, if the fractional Helly number $\fh_\F$ is bounded then it is at most $\mu(K)+1$, where $\mu(K)$ denotes the maximum sum of dimensions of two disjoint simplices in $K$.
\end{itemize}

\subsection{Statement of the results}\label{s:results}

Our main contribution is to confirm Conjecture~\ref{conj:kalai-meshulam} for some diverging homological shatter functions and when the ground set is a manifold. This decomposes into five independent results.

\medskip

Throughout the paper, we work with compact piecewise-linear (PL) manifolds (possibly with  boundary); see~\cite[$\mathsection 1$]{Rourke1972IntroductionTP} for an introduction. We first generalize Theorem~\ref{t:homological-vKF}:

\begin{theorem}\label{t:forbidden}
For every integers $d \ge 3$ and $b$ there exists $N=N(d,b)$ such that $\Delta_{N}^{(\lceil d/2 \rceil)}$ does not homologically almost embed in any compact, $(\lceil d/2 \rceil-1)$-connected, $d$-dimensional PL manifold (possibly with boundary) $\M$ with $\beta_{\lceil d/2 \rceil}(\M;\Z_2) \le b$.
\end{theorem}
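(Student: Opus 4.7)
\emph{Proof sketch.} Write $k = \lceil d/2 \rceil$. The plan is to mimic the chain-level proof of Theorem~\ref{t:homological-vKF} while using a Ramsey reduction to neutralise the bounded $k$-th $\Z_2$-homology of $\M$. First we reduce to the case $d = 2k$: if $d = 2k-1$, replace $\M$ by $\M \times [0,1]$, which is a compact $(k-1)$-connected PL $2k$-manifold with $\beta_k(\M \times [0,1]; \Z_2) = \beta_k(\M;\Z_2) \le b$; composition with the inclusion $\M \cong \M \times \{0\} \hookrightarrow \M \times [0,1]$ transfers any homological almost-embedding of $\Delta_N^{(k)}$ in $\M$ to one in $\M \times [0,1]$. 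Hence we may assume $d = 2k$.

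Let $a \colon C_*(\Delta_N^{(k)}) \to C_*(\M;\Z_2)$ be a putative homological almost-embedding. Since $\M$ is $(k-1)$-connected we have $Z_{k-1}(\M;\Z_2) = B_{k-1}(\M;\Z_2)$, and we can fix a $\Z_2$-linear section $F \colon Z_{k-1}(\M;\Z_2) \to C_k(\M;\Z_2)$ of the boundary operator. For each $k$-face $\sigma$ of $\Delta_N^{(k)}$, the chain $z(\sigma) \eqdef a(\sigma) + F(\partial a(\sigma))$ is then a $k$-cycle, and its class $\phi(\sigma) \in H_k(\M;\Z_2)$ takes at most $2^b$ values. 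Applying Ramsey's theorem for $(k+1)$-uniform hypergraphs yields, for $N$ large enough in terms of $k$, $b$ and a threshold $M$ fixed below, a subset $V \subset \{0, \ldots, N\}$ with $|V| = M$ on which $\phi$ is constant, equal to some $c \in H_k(\M;\Z_2)$.

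On the sub-complex $\Delta_V^{(k)}$ we then play the chain-level van Kampen game. For any two disjoint $k$-faces $\sigma, \tau \subset V$, the $\Z_2$-intersection pairing $\langle z(\sigma), z(\tau)\rangle$ equals the self-intersection $c \cdot c$ on $\M$, a fixed element of $\Z_2$. Expanding $z(\sigma) = a(\sigma) + F(\partial a(\sigma))$ and using the disjointness $\supp(a(\sigma)) \cap \supp(a(\tau)) = \emptyset$ (which encodes the almost-embedding hypothesis) turns this identity into a linear constraint on intersection numbers involving the filling chains $F(\partial a(\cdot))$. An iterated Ramsey argument on lower-dimensional skeleta (colouring $({<}k)$-faces by homological invariants of the relevant filling chains) uniformises these correction terms across $V$. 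Taking $M$ above the classical van Kampen--Flores threshold $2k+2$ and large enough that the parity of disjoint pairs dominates the uniformised corrections, the classical nonzero van Kampen obstruction of $\Delta_{2k+2}^{(k)}$ is recovered, contradicting the constancy of $\langle z(\sigma),z(\tau)\rangle$.

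The main obstacle will be the bookkeeping of the correction terms $F(\partial a(\cdot))$: the iterated Ramsey reductions must be arranged so that every filling-induced intersection number simultaneously stabilises on the final $V$, allowing the classical combinatorial obstruction to be invoked. Care is also needed when $\M$ has boundary, which we handle by working with relative cycles and the associated relative intersection pairing, using that $(k-1)$-connectedness still ensures enough fillings exist in the absolute chain complex.
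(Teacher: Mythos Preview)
Your reduction to $d=2k$ and the Ramsey step on homology classes $[z(\sigma)]\in H_k(\M;\Z_2)$ match the paper's opening moves. From there, however, the paper takes a completely different route that you are missing, and the route you sketch has a genuine gap.

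The paper does \emph{not} work with the intersection form on $\M$ directly. Instead it invokes Skopenkov's theorem (Theorem~\ref{th:skopenkov}) to produce a simplicial map $g$ from a triangulation of $\M$ into $\R^{2k}$ whose intersection parities are governed by the intersection form $\cap_\M$. Composing with the almost-embedding gives a chain map $b$ into $\R^{2k}$, and after Ramsey every disjoint pair $\sigma,\tau$ has $|b(\sigma)\cap b(\tau)|\equiv \cap_\M(\square,\square)\pmod 2$. To kill the possibility $\cap_\M(\square,\square)=1$ the paper precomposes with the barycentric subdivision chain map, so that each $k$-face is carried to $k!$ simplices; since $(k!)^2$ is even for $k\ge 2$, all intersections become even. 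Theorem~\ref{t:HT} (the homological Hanani--Tutte theorem, proved separately) then upgrades this to a homological almost-embedding of $\Delta_{2k+2}^{(k)}$ in $\R^{2k}$, contradicting Theorem~\ref{t:homological-vKF}.

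Your proposal tries to stay inside $\M$ and extract a contradiction from $\langle z(\sigma),z(\tau)\rangle = c\cdot c$ by ``expanding'' and using $\supp a(\sigma)\cap\supp a(\tau)=\emptyset$. This is where the gap lies. First, the cross-terms $\langle a(\sigma), F(\partial a(\tau))\rangle$ and $\langle F(\partial a(\sigma)), F(\partial a(\tau))\rangle$ are not well-defined homological quantities: the chains $a(\sigma)$ and $F(\partial a(\tau))$ are not cycles and need not be in general position, so ``intersection number'' has no invariant meaning here. Second, the filling $F(\partial a(\sigma))$ depends on the entire $(k{-}1)$-cycle $\partial a(\sigma)$, not on individual lower-dimensional faces of $\sigma$, so it is unclear what ``colouring $({<}k)$-faces by homological invariants of the filling chains'' could mean or why an iterated Ramsey argument would uniformise these corrections. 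Third, even granting all of that, you give no mechanism for the case $c\cdot c=1$: the ``classical van Kampen obstruction'' concerns maps to $\R^{2k}$, and constancy of $\langle z(\sigma),z(\tau)\rangle$ in $\M$ does not by itself recover it. The paper's barycentric-subdivision trick, together with Skopenkov's transfer to $\R^{2k}$ and the homological Hanani--Tutte theorem, is precisely what closes this gap.
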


\noindent
This partially answers~\cite[Problem~3]{patakova2022bounding},~\cite[Conjecture~1.7]{steppingUp} and~\cite[Conjecture~2]{patak2025sharper} and extends the previous confirmation of Conjecture~\ref{conj:kalai-meshulam} from set systems in $\R^d$ to set systems on manifolds. This also extends several results on set systems with bounded topological complexity (Helly's theorem, Radon's theorem, $(p,q)$-theorem, \ldots) from $\R^d$ to sufficiently connected manifolds (see~\cite{patakova2022bounding,steppingUp}). We can relax the connectivity assumption (see Appendix~\ref{app:relax}) but not remove it.

\medskip

 One ingredient in the proof of Theorem~\ref{t:forbidden} is the following analogue of the Hanani-Tutte theorem for homological almost-embeddings, which is of independent interest.

\begin{theorem}\label{t:HT}
  Let $K$ be a simplicial complex of dimension $k>1$ and let $\M$ be a compact
  $2k$-dimensional PL manifold (possibly with boundary). If there exists a triangulation $T$ of $\M$ and a non-trivial chain map $f:C_*(K; \Z_2)\to C_*(T; \Z_2)$ in
  general position such that the images of any two non-adjacent
  $k$-faces $\sigma,\tau\in K$ intersect in an even number of points,
  then $K$ is a homological minor of $\mathcal{M}$.
\end{theorem}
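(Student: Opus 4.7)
The plan is to modify the chain map $f$ by adding suitable $k$-cycles to its values on $k$-simplices, so as to erase all intersections between images of non-adjacent $k$-faces via a homological analogue of the finger move. The even-intersection hypothesis will let the intersection points be paired up and each pair eliminated locally.

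First I would note how flexible the permissible modifications are. Since $K$ has dimension $k$, no $k$-simplex $\sigma$ of $K$ appears in the boundary of any simplex of $K$. Hence replacing $f(\sigma)$ by $f(\sigma) + z_\sigma$ for any $k$-cycle $z_\sigma \in C_k(T; \Z_2)$ still defines a chain map, since $\partial(f(\sigma) + z_\sigma) = \partial f(\sigma) = f(\partial \sigma)$. This modification leaves $f$ unchanged on $0$-chains, so non-triviality is preserved. The problem thus reduces to choosing the cycles $z_\sigma$ so that for every pair $(\sigma, \tau)$ of non-adjacent $k$-faces of $K$, the supports of $f(\sigma) + z_\sigma$ and $f(\tau) + z_\tau$ in $T$ are disjoint.

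The core step is a local elimination. Fix non-adjacent $k$-faces $\sigma, \tau$ of $K$. By general position, $\supp(f(\sigma)) \cap \supp(f(\tau))$ is a finite set of $2m$ transverse points in $\M$, which I pair arbitrarily. For a pair $(x, y)$, connect $x$ to $y$ by an arc $\alpha$ in $\supp(f(\sigma))$ and an arc $\beta$ in $\supp(f(\tau))$; the loop $\alpha \cup \beta$ is null-homologous mod~$2$ in $\M$, so it bounds a singular $2$-chain $W$ (possibly after a preliminary adjustment of $f$ by cycles representing classes in $H_1(\M; \Z_2)$). Thickening $W$ in the normal direction to $\supp(f(\sigma))$ yields a $(k+1)$-chain $B$ in $T$ whose boundary $z = \partial B$ consists of $\alpha$ together with a parallel arc $\alpha'$ pushed off of $\supp(f(\tau))$. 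Adding $z$ to $f(\sigma)$ swaps $\alpha$ for $\alpha'$ in the support, removing the two intersection points $x$ and $y$. Iterating this over all pairs of points and all non-adjacent $(\sigma, \tau)$ should eventually produce disjoint supports.

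The main obstacle I expect is non-interference: a finger move for $(\sigma, \tau)$ may create new intersections of $f(\sigma)$ with a third image $f(\tau')$ for some $\tau'$ non-adjacent to $\sigma$. I would address this by subdividing $T$ finely enough that each finger move lives in a small, controllable neighborhood, and by organizing the iterations so that a well-chosen complexity (for instance the total number of intersection points across all non-adjacent pairs, under a lexicographic priority) strictly decreases. The hypothesis $k > 1$ is crucial: codimension at least $2$ allows Whitney arcs to be placed in general position, and since we work homologically and not up to ambient isotopy, the $2$-chain $W$ need not be embedded, so the mod-$2$ machinery is sufficient even when $k = 2$ (where the classical smooth Whitney trick fails). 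Once all pairwise intersections among supports of non-adjacent $k$-faces have been cleared, the resulting chain map is the sought homological almost-embedding of $K$ into $\M$, so $K$ is a homological minor of $\mathcal{M}$.
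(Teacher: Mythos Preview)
Your high-level strategy is the paper's: since $\dim K = k$, adding a $k$-cycle to any $f(\sigma)$ preserves the chain-map property, so one may iteratively cancel intersection pairs until a homological almost-embedding results. The paper does exactly this, removing two points of $f(\sigma)\cap f(\tau)$ at each step.

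The gap is in your cancellation move. First, the arcs $\alpha \subset \supp(f(\sigma))$ and $\beta \subset \supp(f(\tau))$ need not exist: these supports are arbitrary $\Z_2$-chains and need not be connected. Second, even granting $\alpha$ and $\beta$, the loop $\alpha \cup \beta$ need not be null-homologous in $\M$, because the theorem imposes \emph{no} connectivity hypothesis on $\M$ (that hypothesis appears only in Theorem~\ref{t:forbidden}); your ``preliminary adjustment by cycles representing classes in $H_1(\M;\Z_2)$'' is not well-defined, since modifying $f$ on $1$-simplices breaks the chain-map condition for $2$-simplices. Third, you correctly flag the non-interference problem but do not solve it: your $2$-chain $W$ and its thickening may cross $\supp(f(\tau'))$ for other $\tau'$, and no complexity measure is actually shown to decrease.

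The paper's key idea handles all three issues at once. Rather than running arcs through the supports, it connects the two intersection points $x,y$ by a path $P'$ in the \emph{ambient manifold}, chosen by general position (using $1+k < 2k$, i.e.\ $k>1$) so that a full regular neighbourhood of $P'$ misses the image of $f$ entirely. Together with small balls around $x$ and $y$ this yields a $2k$-ball $B'$ that meets $\im(f)$ only in the local $\sigma$- and $\tau$-disks near $x$ and $y$. All surgery then happens inside $B'$: the two $\sigma$-disks are piped together along a tube near $\partial B'$, and a linking-number computation on the sphere $\partial B' \cong \s^{2k-1}$ supplies the compensating chain for $\tau$. Because everything lives in a ball, the global homology of $\M$ never enters; because $B'$ avoids all other images by construction, non-interference is automatic and the total intersection count drops by exactly two.
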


\noindent
We formalize what we mean by general position in Section~\ref{s:back}.

\medskip

The homological shatter function $\Phi_\F^{(h)}$ defined in Equation~\eqref{eq:homshat} can be reformulated as a \emph{graded} version of the topological complexity $\hc^{(h)}_\F$ defined in Equation~\eqref{eq:topcomp}, where the value for parameter $t$ considers only intersections of subfamilies of size at most $t$: $\phi_\F^{(h)}(t)= \sup_{\substack{\F'\subset\F\\ |\F'| \le t} }\hc^{(h)}_{\F'}$. We systematize this viewpoint and define graded analogues of other parameters of set systems. For instance here are the \defn{graded Radon} and \defn{graded Helly} numbers:

\begin{equation}
  \rad_\F(t) \coloneqq \sup \limits_{\substack{\F'\subset \F \\ |\F'|\leq t}} \rad_{\F'}
 \quad \text{and} \quad
 \h_\F(t) \coloneqq \sup\limits_{\substack{\F'\subset \F \\ |\F'|\leq t}} \h_{\F'}.
\end{equation}

\noindent
It is straightforward to see that if the graded Helly numbers of a set system do not grow fast enough, then they are ultimately stationary and the set system has bounded Helly number (Lemma~\ref{l:hellygrowth}). We prove a similar condition for (graded) Radon numbers:

\begin{theorem}\label{t:radgrowth}
  Let $\F$ be a set system. If $\lim_{t \to \infty} \rad_\F(t) - \log_2 t = -\infty$, then $\rad_\F < \infty$.
\end{theorem}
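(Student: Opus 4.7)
The plan is to argue by contrapositive: assuming $\rad_\F = \infty$, I will show that $\rad_\F(t) - \log_2 t > 0$ for arbitrarily large $t$, which contradicts the hypothesis $\lim_{t\to\infty}(\rad_\F(t) - \log_2 t) = -\infty$. Since $\rad_\F = \infty$, for every integer $r \ge 2$ there is an $r$-element set $S_r \subseteq X$ admitting no Radon partition in $\F$; that is, $\conv_\F(P_1) \cap \conv_\F(P_2) = \emptyset$ for every partition $S_r = P_1 \sqcup P_2$ into two nonempty parts.

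The key step is to compress this witness into a subsystem of $\F$ whose cardinality is only exponential in $r$. I take
\[
\F_r \eqdef \set{\conv_\F(P) \,:\, \emptyset \neq P \subseteq S_r},
\]
which sits inside $\F$ and has cardinality at most $2^r - 1$. For any nonempty $P \subseteq S_r$ I claim that $\conv_{\F_r}(P) = \conv_\F(P)$: the inclusion $\supseteq$ holds because $\F_r \subseteq \F$ imposes fewer constraints on the hull, and for $\subseteq$ it suffices to observe that $\conv_\F(P)$ itself lies in $\F_r$ and contains $P$, hence it appears in the intersection defining $\conv_{\F_r}(P)$. The disjointness of $\conv_\F(P_1)$ and $\conv_\F(P_2)$ therefore transfers verbatim to $\F_r$, so $S_r$ admits no Radon partition in $\F_r$ and $\rad_{\F_r} > r$.

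Setting $t = 2^r - 1$ yields $\rad_\F(t) \ge \rad_{\F_r} > r > \log_2 t$, so $\rad_\F(t) - \log_2 t > 0$ for arbitrarily large $t$, contradicting the hypothesis. The main subtle point is the inclusion $\F_r \subseteq \F$, which requires $\F$ to contain its own $\F$-convex hulls; this is the standard convexity-space convention (cf.\ Appendix~\ref{app:convspaces}) and is the one step that needs care when reducing from general set systems. Once this is in place, the factor $2$ in the base of the logarithm is exactly what the $2^r$ subsets of $S_r$ contribute, which is why the threshold growth rate is $\log_2 t$.
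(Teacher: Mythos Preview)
Your argument has a genuine gap at precisely the point you flag. The inclusion $\F_r \subseteq \F$ requires every $\conv_\F(P)$ to be a member of $\F$, i.e.\ that $\F$ be closed under (arbitrary) intersections. The theorem, however, is stated for an \emph{arbitrary} set system, and the graded Radon number $\rad_\F(t)$ is a supremum over subfamilies $\F' \subseteq \F$; if $\F_r \not\subseteq \F$ then $\rad_{\F_r}$ simply does not contribute to $\rad_\F(t)$ and the inequality $\rad_\F(2^r-1) \ge \rad_{\F_r}$ is unjustified. Your reference to Appendix~\ref{app:convspaces} is backwards: that appendix is devoted to showing that various results (Levi's inequality, Holmsen--Lee) extend \emph{from} convexity spaces \emph{to} general set systems, precisely so that no intersection-closure hypothesis is needed anywhere in the paper. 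There is no obvious patch: the emptiness of $\conv_\F(P_1)\cap\conv_\F(P_2)$ is in general the emptiness of an infinite intersection of members of $\F$, which need not be witnessed by any finite subfamily.

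The paper's proof avoids this issue entirely. Instead of compressing the Radon witness, it controls the \emph{growth} of $t \mapsto \rad_\F(t)$ via Lemma~\ref{l:radongrowth}: whenever $\rad_\F(t) > \rad_\F(t-1)$, a pigeonhole argument on the $2^{n-1}-1$ bipartitions of an $n$-point set (with $n = \rad_\F(t-1)$) produces a Helly-type obstruction inside a subfamily of size $t$, forcing $\rad_\F(t-1) \ge 1 + \log_2\bigl(1 + t/\h_\F(t)\bigr)$. Combined with Lemma~\ref{l:hellygrowth} (which handles the case $\h_\F = \infty$), this gives $\rad_\F(t_i) \ge \log_2 t_i - O(1)$ along an infinite sequence whenever $\rad_\F = \infty$. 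The pigeonhole on bipartitions is what replaces your $2^r$ count, but it operates inside a genuine subfamily of $\F$ rather than a family of hulls. Your argument is correct and pleasantly direct if one adds the hypothesis that $\F$ is intersection-closed, but as written it does not prove the theorem as stated.
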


\noindent
As a application, we extend Patáková's theorem from bounded to sufficiently slowly diverging homological shatter function (Corollary~\ref{c:Patakova+}). This, in turns yields fractional Helly theorems for sufficiently slowly diverging homological shatter functions.

\begin{corollary}\label{c:itgrows!}
  For every simplicial complex $K$ there exists a function $\Psi_K : \N \to \N$ with $\lim_{t \to \infty} \Psi_K(t) = +\infty$ such that the following holds. If $\F$ is a set system whose ground set has $K$ as forbidden homological minor and such that $\phi_{\F}^{(\dim K)}(t) \le \Psi_K(t)$ for $t$ large enough, then $\F$ has fractional Helly number at most $(\mu(K)+1)$.
\end{corollary}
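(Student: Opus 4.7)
The plan is to combine the three ingredients already at our disposal: Patáková's graded-extended Radon bound (Corollary~\ref{c:Patakova+}), our growth dichotomy for the Radon number (Theorem~\ref{t:radgrowth}), and the known bound on the fractional Helly number of set systems with a forbidden homological minor once the fractional Helly number is known to be finite. Concretely, given $K$, let $f_K:\N\to\N$ be the function supplied by Corollary~\ref{c:Patakova+} bounding $\rad_{\F}(t)$ in terms of $\phi_{\F}^{(\dim K)}(t)$ for set systems whose ground set has $K$ as a forbidden homological minor. We will choose $\Psi_K$ so slowly that $f_K(\Psi_K(t)) - \log_2 t \to -\infty$ (for instance, set $\Psi_K(t)$ to be the largest integer $m$ with $f_K(m)\le \tfrac{1}{2}\log_2 t$, which tends to infinity since $f_K$ is real-valued and monotone in its argument).

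With this choice of $\Psi_K$, assume $\F$ satisfies $\phi_{\F}^{(\dim K)}(t)\le \Psi_K(t)$ for all sufficiently large $t$. First I would invoke Corollary~\ref{c:Patakova+} on every finite restriction $\F'\subset\F$ to deduce
\[
  \rad_{\F}(t) \;\le\; f_K\!\bigl(\phi_{\F}^{(\dim K)}(t)\bigr) \;\le\; f_K(\Psi_K(t))
\]
for $t$ large. By the choice of $\Psi_K$ this forces $\rad_{\F}(t)-\log_2 t \to -\infty$, so Theorem~\ref{t:radgrowth} applies and yields $\rad_{\F}<\infty$.

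Next I would feed this into Holmsen--Lee's theorem~\cite[Theorem~1.1]{Holmsen2021}, which bounds $\fh_{\F}$ by a function of $\rad_{\F}$. This gives that $\fh_{\F}$ is finite, a priori by some quantity depending on $K$ and the chosen $\Psi_K$. Finally, since the ground set of $\F$ has $K$ as a forbidden homological minor and $\fh_{\F}<\infty$, the Goaoc--Holmsen--Pat\'akov\'a sharpening~\cite[Theorem~1.2]{steppingUp} improves this bound all the way down to $\fh_{\F}\le \mu(K)+1$, which is exactly the conclusion of the corollary.

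The only real work is in verifying that the quantitative bounds stated earlier in the paper genuinely admit a graded refinement: one needs that Pat\'akov\'a's bound on the Radon number depends monotonically and only on the homological shatter function values at the relevant subfamily size, so that Corollary~\ref{c:Patakova+} can be applied uniformly to all finite subfamilies $\F'$ of $\F$ to control $\rad_{\F}(t)$. Once this monotone dependence is in hand, the rest of the argument is a purely formal chaining of the three black-box results, and the function $\Psi_K$ is simply whatever inverse-type quantity is needed to satisfy the hypothesis of Theorem~\ref{t:radgrowth}.
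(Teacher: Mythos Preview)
Your approach is correct and essentially identical to the paper's, modulo a mislabeling: the function you call $f_K$ (a quantitative bound $\rad_{\F'}\le f_K(\hc_{\F'}^{(\dim K)})$) is not what Corollary~\ref{c:Patakova+} supplies---that corollary already outputs the growth function $S_K$ and concludes $\rad_\F<\infty$ directly---but rather Pat\'akov\'a's original bound $\Psi^{(K)}_{\hc\to\rad}$ from~\cite{patakova2022bounding}, and your construction of $\Psi_K$ as an inverse of $f_K$ against $\tfrac12\log_2 t$ is exactly the construction of $S_K$ inside the proof of Corollary~\ref{c:Patakova+}. In other words, you have inlined Corollary~\ref{c:Patakova+} rather than invoked it; the paper's proof simply sets $\Psi_K:=S_K$, cites Corollary~\ref{c:Patakova+} once to get $\rad_\F<\infty$, and then chains Holmsen--Lee and~\cite[Theorem~1.2]{steppingUp} exactly as you do.
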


\medskip

We then investigate more graded numbers and establish more relations between graded and ungraded numbers. As an application, we extend the Holmsen-Lee bound on $\Psi_{\rad \to \fh}(\cdot)$. Let $\Xi: \N \to \N$ be the function defined by $\Xi(r) := r^{r^{\lceil \log_2r \rceil}+r\lceil \log_2r \rceil}$. 

\begin{theorem}\label{t:radgrowth2}
  Let $\Psi: \N\rightarrow \N$ and $t_0 \in \N$ such that $\Psi(t) < t+1$ for every $t \ge t_0$. If there exists an integer $t_1 \ge {t_0}^2$ such that $\Xi\pth{\Psi(t_1)} < \frac{t_1}{t_0}$, then every set system whose graded Radon number function is bounded from above by $\Psi$ has bounded fractional Helly number.
\end{theorem}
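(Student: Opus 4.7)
The plan is to combine the Holmsen--Lee bound $\fh_{\F'} \le \Xi(\rad_{\F'})$ applied to subfamilies of $\F$ of size at most $t_1$ with a bootstrap argument that exploits the arithmetic slack in the hypothesis $\Xi(\Psi(t_1)) \cdot \Psi(t_0) < t_1$. Set $r := \Psi(t_1)$, $s := \Xi(r)$ and $r_0 := \Psi(t_0)$. The easy first step is to observe that for every subfamily $\F' \subset \F$ with $|\F'| \le t_1$, the hypothesis gives $\rad_{\F'} \le \rad_\F(t_1) \le r$, so by Holmsen--Lee we have $\fh_{\F'} \le s$; a close reading of their proof shows that the accompanying density function depends only on $r$ and $\alpha$, not on the particular $\F'$, and hence provides a uniform function $\beta_s(\alpha)$ valid across all such $\F'$. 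Separately, Levi's inequality gives $\h_{\F'} \le r_0 - 1$ for every $\F' \subset \F$ with $|\F'| \le t_0$.

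The next step is to lift this fractional Helly bound from subfamilies of size at most $t_1$ to all finite subfamilies of $\F$. Fix a finite $\F' \subset \F$ of size $n > t_1$ and suppose that a fraction $\alpha$ of its $s$-tuples have nonempty intersection. A random-sampling argument yields $N = \Omega(n/t_1)$ disjoint ``rich'' $t_1$-subfamilies $\G_1, \ldots, \G_N \subset \F'$, each of density at least $\alpha/2$ of intersecting $s$-tuples. Applying the first step to each $\G_i$ produces a common-intersection witness $H_i \subset \G_i$ of size at least $\beta_s(\alpha/2)\, t_1$, supported at some point $p_i \in \bigcap_{F \in H_i} F$. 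The inequality $s r_0 < t_1$ guarantees that $|H_i| \ge r_0$, so the Helly bound at scale $t_0$ applies within each witness, while the condition $t_1 \ge t_0^2$ ensures the witnesses are statistically numerous enough relative to the Helly scale $t_0$ for a Ramsey-type pigeonhole on witness types to find many $\G_i$ whose local witnesses can be merged into a global subfamily of $\F'$ of size $\beta(\alpha) n$ with common intersection.

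The principal technical obstacle is the aggregation in the second step: combining the many local witnesses into a single global witness of size linear in $n$ requires substantially more than simple averaging, and is precisely where both hypotheses $t_1 \ge t_0^2$ and $\Xi(\Psi(t_1)) \cdot \Psi(t_0) < t_1$ enter decisively. The intuition is that $t_1/(s r_0) > 1$ guarantees enough common-intersection mass in each rich subsample for a Helly-type merging to bite, while the quadratic gap $t_1 \ge t_0^2$ ensures the pigeonhole on witness types finds sufficiently many compatible witnesses to merge into a linearly large global witness; organizing these two scales cleanly so that the resulting $\beta$ depends only on $\alpha$ and $\Psi$ is the key technical point.
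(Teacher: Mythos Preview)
Your first step is fine, but the entire content of the theorem is hidden in your ``aggregation'' step, and what you sketch there is not a proof. Knowing that each $t_1$-subfamily contains a clique $H_i$ of size $\beta_s(\alpha/2)\,t_1$ at some point $p_i$ gives you no mechanism for merging these cliques: the points $p_i$ are unrelated, and a ``Ramsey-type pigeonhole on witness types'' is not a defined argument. Neither hypothesis is actually used in a concrete way in your aggregation: you never explain what structural compatibility two witnesses must share in order to merge, nor why $t_1 \ge t_0^2$ or $s\,r_0 < t_1$ would force many witnesses to share it. As written, the second step is a wish, not an argument.

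The paper avoids aggregation entirely and takes a different route. First, it uses Lemma~\ref{l:hellygrowth} (not just Levi at scale $t_0$) to conclude that the \emph{global} Helly number satisfies $\h_\F \le t_0$; this is what the hypothesis $\Psi(t)<t+1$ for all $t\ge t_0$ is really for. Once $\h_\F\le t_0$, cliques coincide with $t_0$-wise cliques, and the problem becomes purely combinatorial on the $t_0$-uniform intersection hypergraph. The paper then bounds the \emph{graded colorful} Helly number $\ch^{(t_0)}_\F(t_1)$ via the Holmsen--Lee colorful bound applied to each subfamily of size $\le t_1$, and invokes Lemma~\ref{l:holmsen-graded} (a restatement of Holmsen's forbidden-pattern theorem) with $c=t_0$ and $\ell=t_1/t_0$: the hypothesis $\Xi(\Psi(t_1))<t_1/\Psi(t_0)$ gives exactly $\ch^{(t_0)}_\F(c\ell)\le\ell$, and $t_1\ge t_0^2$ gives $\ell>c$. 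This yields $\fh_\F=\fh^{(t_0)}_\F\le\ch^{(t_0)}_\F(t_1)$ directly, with no merging of local witnesses at all.
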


\noindent
With Theorem~\ref{t:radgrowth2} we can strengthen Corollary~\ref{c:itgrows!}, as a close inspection of the proof reveals that the function $\beta$ associated to the fractional Helly number depends only on $\Psi$ and $t_0$. (It also allows a slightly faster growth than Corollary~\ref{c:Patakova+}.) We postpone this to the full version of the paper.

\section{On homological minors}\label{s:toolbox}\label{s:back}

In this section we prove Theorems~\ref{t:forbidden} and~\ref{t:HT}. For completeness, we start with a consequence of the simplicial approximation theorem that allows us to work purely in simplicial homology:

\begin{lemma}\label{l:going-generic-simplicial}
  A simplicial complex $K$ is a homological minor of a compact PL manifold (possibly with boundary) $\M$ if and only if $K$ is a homological minor of some triangulation of~$\M$.
\end{lemma}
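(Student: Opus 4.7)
I would prove the two directions of the equivalence separately. The ``if'' direction is immediate: given a triangulation $T$ of $\M$ and a homological almost-embedding $a\colon C_*(K)\to C_*(T)$, the natural chain map $\iota\colon C_*(T)\hookrightarrow \Csing(\M)$ obtained by viewing each simplicial simplex through its linear parameterization preserves supports verbatim, so $\iota\circ a$ inherits the disjoint-support and nontriviality properties and is thus a homological almost-embedding of $K$ into $\M$.

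For the converse, the plan starts with compactness. Let $a\colon C_*(K)\to \Csing(\M)$ be a homological almost-embedding with $K$ finite. Each support $A_\sigma \coloneqq \supp(a(\sigma))$ is a compact subset of $\M$ (a finite union of images of continuous maps out of standard simplices), so there exists $\epsilon>0$ such that (i) $d(A_\sigma,A_\tau)>\epsilon$ for every non-adjacent pair $\sigma,\tau\in K$, and (ii) any two distinct points of $\bigcup_{v\in K^{(0)}} A_v$ are more than $\epsilon$ apart. I would then refine any initial PL triangulation of $\M$ by iterated barycentric subdivision to obtain a triangulation $T$ with mesh less than $\epsilon/4$.

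Next, I would combine the iterated barycentric subdivision operator on singular chains (chain-homotopic to the identity) with simplicial approximation of the resulting small singular simplices to produce a chain map $\pi\colon \Csing(\M)\to C_*(T)$ satisfying: for every singular chain $c$, $\supp(\pi(c))$ lies in the $2\cdot\mathrm{mesh}(T)$-neighborhood of $\supp(c)$; and $\pi$ sends each singular $0$-simplex $p\in\M$ to a vertex of $T$ in the closed star of $p$. The desired simplicial chain map is then $b\coloneqq\pi\circ a$. Disjoint supports for non-adjacent faces follow from the first property of $\pi$ combined with~(i), since the $\epsilon/2$-neighborhoods of two sets separated by more than $\epsilon$ are disjoint. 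Nontriviality follows from the second property combined with~(ii): any two points in the closed star of a common vertex of $T$ are within $2\cdot\mathrm{mesh}(T)<\epsilon$ of each other, so distinct points of $A_v$ are sent to distinct vertices of $T$, preserving the odd cardinality of $\supp(a(v))$.

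The main obstacle will be ensuring that $\pi$ is a genuine chain map (not merely a chain map up to homotopy) with the stated support-control properties. This is classical but delicate: boundary compatibility requires coherent subdivisions of domain simplices across faces (e.g., by iterated barycentric subdivision uniformly applied), so that the individual simplicial approximations patch together rather than producing mismatches on common boundary faces. A standard way around this is to use the acyclic carrier theorem with carriers given by small neighborhoods in $T$ of the images of singular simplices, which produces $\pi$ as a chain map carried in the prescribed neighborhoods.
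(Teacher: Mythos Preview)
Your proposal is correct and follows the same overall strategy as the paper: pick a triangulation fine enough that no simplex meets the supports of two non-adjacent faces, then replace each singular simplex by a simplicial approximation while keeping the result a chain map. You also correctly isolate the only nontrivial point, namely boundary compatibility of the approximations.

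The difference lies in how that compatibility is enforced. The paper works inductively by dimension and invokes Zeeman's \emph{relative} simplicial approximation theorem: having already approximated all faces of a singular simplex~$\tau$, it glues the old boundary homotopies onto~$\tau$ to get a map~$\widetilde\tau$ that is already simplicial on~$\partial\Delta^t$, then approximates~$\widetilde\tau$ rel boundary. This gives the chain-map property by construction, with no further choices. Your route via the acyclic carrier theorem is a legitimate alternative, but note that the carriers are not acyclic for free: the simplicial neighborhood in~$T$ of the image of a singular simplex can have arbitrary homotopy type, so you must arrange (via sufficient subdivision and the PL-manifold structure) that each carrier is a ball. Once that is done, acyclic carriers give you the chain map in one stroke, which is arguably cleaner bookkeeping than the paper's inductive glueing; the price is that the support control is slightly coarser (carried-in-a-neighborhood rather than pointwise approximation), though your $\epsilon/4$ margin absorbs this.

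One small remark on your nontriviality argument: you only need the points of each individual~$A_v$ to map to distinct vertices, not the points of the union~$\bigcup_v A_v$; your condition~(ii) is stronger than necessary but of course still achievable by finiteness.
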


\noindent
Due to space limitation we defer the proof of Lemma~\ref{l:going-generic-simplicial} to Appendix~\ref{app:simpapp}.

\medskip

When counting intersection points between chains, we focus on intersections that are stable under small perturbation. We therefore consider generic intersections, meaning intuitively that the chains intersect transversally. We formalize this in terms of linking numbers.\footnote{Intuitively, this generalizes the idea that in the plane, two curves cross at a point $x$ if the branches of the curves alternate around $x$.}

\medskip

Let $X$ be a triangulation of $\s^{2k-1}$. Let $S_1$ and $S_2$ be two subcomplexes of $X$ with $|S_1|$ and $|S_2|$ homeomorphic to $\s^{k-1}$. Suppose that the simplicial complex $X \setminus S_2$ induced by $X$ on the vertices not in $S_2$ has a geometric realization homotopy equivalent to $\s^{2k-1}\setminus \s^{k-1}$ (this can always be ensured up to taking a subdivision of $X$). The \defn{linking number} of $S_1$ and $S_2$ in~$X$ is $1$ if the homology class $[S_1]$ generates $H_{k-1}(X\setminus S_2;\Z_2)\cong\Z_2$, and $0$ if $[S_1]$ is trivial in $H_{k-1}(X\setminus S_2;\Z_2)$. (Exchanging $S_1$ and $S_2$ in this definition yields the same result.)

\medskip

Let $T$ be a triangulation of a compact PL manifold (possibly with boundary) $\M$. Two $k$-chains $z_1,z_2\in C_k(T;\Z_2)$ \defn{intersect generically} in vertex $v$ if the closed star $B$ of $v$ in $T$ satisfies: $B \cap z_1 \cap z_2 = \{v\}$, $D_1:= z_1 \cap B$ and $D_2:= z_2 \cap B$ are $k$-dimensional balls, and the spheres $\partial D_1$ and $\partial D_2$ have linking number $1$ in $\partial B$. Two $k$-chains $z_1,z_2\in C_k(T;\Z_2)$ are \defn{in general position} if $z_1 \cap z_2$ consists of finitely many vertices, and $z_1$ and $z_2$ intersect generically in each of these vertices. Two $k$-chains $z_1,z_2\in C_k(T;\Z_2)$ \defn{intersect evenly} if they are in general position and $\im(z_1)\cap \im(z_2)$ has even size.

\medskip

For $K$ a $k$-dimensional complex, a simplicial chain map $f:C_\bullet(K;\Z_2) \to C_\bullet(T;\Z_2)$ is in \defn{general position} if for every non-adjacent $k$-faces $\sigma, \tau \in K$, the chains $f(\sigma)$ and $f(\tau)$ are in general position and for each vertex $v \in f(\sigma) \cap f(\tau)$, $\sigma$ and $\tau$ are the only faces of $K$ whose images under $f$ intersect the closed star of $v$ in $T$. We say that a simplicial map $f:K \to T$ is  in \defn{general position} if the associated chain map $f_\#$ is.

\medskip

For any compact PL manifold (possibly with boundary) $\M$ of dimension $2k$, there exists a map $\cap_{\M}:H_k(\M;\Z_2)\times H_k(\M;\Z_2)\to\Z_2$, called the \defn{intersection form} of $\M$, such that $\cap_{\M}([z_1],[z_2])\in\Z_2$ counts the intersection points of $z_1$ and $z_2$ modulo $2$. We use no property of intersection forms besides their existence, and refer the interested reader to Prasolov~\cite[Chapter~2, $\S$2.7]{prasolov2007elements} for a precise definition and to Paták and Tancer~\cite{patak-tancer} for an brief account.

\medskip

The next result is due to Paták and Tancer~\cite[Proposition~21]{patak-tancer} and formulated following the presentation of Skopenkov~\cite{skopenkov}, which is better suited for our purpose. (More precisely, we reformulated \cite[Theorem~1.1.5]{skopenkov} using \cite[Lemma~2.1.1]{skopenkov}.) Note that the proofs by Paták-Tancer~\cite{patak-tancer} and by Skopenkov~\cite{skopenkov} use only PL maps.

\begin{theorem}
  \label{th:skopenkov}
  Let $L$ be a simplicial complex of dimension $k>1$ and let $\M$ be a compact, $(k-1)$-connected, $2k$-dimensional PL manifold (possibly with boundary). The following statements are equivalent:
  
  \begin{description}
  \item[(i)] There exists a triangulation $T$ of $\M$ and a simplicial map  $f:L \to T$ in general position such that the images of any two non-adjacent faces intersect evenly.\medskip
    
  \item[(ii)] There exists a triangulation $R$ of $\R^{2k}$ and a simplicial map $g: L \to R$ in general position and a map $\alpha$ that sends each $k$-face of $L$ to an element of $H_k(\M;\Z_2)$ such that any two non-adjacent $k$-faces $\sigma,\tau \in L$ have images that intersect in an even number of points if and only if $\cap_{\M}(\alpha(\sigma),\alpha(\tau)) = 0$.
  \end{description}
\end{theorem}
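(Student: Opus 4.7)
The plan is to prove the equivalence by establishing both implications directly; this is essentially the metastable content of Skopenkov's work in the $\Z_2$--coefficient setting, where a PL Whitney trick is available since $k>1$. In both directions, a final simplicial approximation step (à la Lemma~\ref{l:going-generic-simplicial}) turns PL constructions into simplicial maps in general position.

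For $(i)\Rightarrow(ii)$, I would close each $f(\sigma)$ into a $k$-cycle representing $\alpha(\sigma)$. Because $\M$ is $(k-1)$-connected, $H_{k-1}(\M;\Z_2)=0$, so every chain $f(\partial\sigma)$ bounds in $\M$; choose a filling $c_\sigma$ inside a small PL ball $B_\sigma$ whose intersection with $f(L)$ is contained in the closed star of $\partial\sigma$, and set $\alpha(\sigma)\eqdef [f(\sigma)+c_\sigma]\in H_k(\M;\Z_2)$. With this choice, the fillings $c_\sigma$ contribute no new intersection points, so $\cap_{\M}(\alpha(\sigma),\alpha(\tau))$ coincides modulo~$2$ with the intersection count of $f(\sigma)$ and $f(\tau)$. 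To produce $g$, I would exploit that any $k$-complex PL-embeds into $\R^{2k+1}$ and hence PL-maps in general position into a fine enough triangulation $R$ of $\R^{2k}$; because $\R^{2k}$ is $(2k-1)$-connected, each pairwise intersection number of non-adjacent $k$-faces can be independently adjusted to any parity using finger moves in $\R^{2k}$, so we can in particular match the parities dictated by $\cap_{\M}(\alpha(\sigma),\alpha(\tau))$.

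For $(ii)\Rightarrow(i)$, I would realize the prescribed homology classes geometrically inside $\M$. By $(k-1)$-connectivity, the $(k-1)$-skeleton of $L$ can be contracted and mapped into a small PL ball $B\subset \mathrm{int}\,\M$. Each attaching map $\partial\sigma\to B$ for a $k$-face $\sigma$ is null-homotopic in $\M$ and so extends to a PL map $\sigma\to \M$; by Hurewicz, the extension can be chosen so that its class in $H_k(\M,B;\Z_2)\cong H_k(\M;\Z_2)$ is $\alpha(\sigma)$. After a general position perturbation, the mod~$2$ intersection count of the images of two non-adjacent $k$-faces equals $\cap_{\M}(\alpha(\sigma),\alpha(\tau))$ plus a purely combinatorial contribution arising from intersections near $B$; the latter is independent of $\M$ and, by the assumption on $g$, agrees with the parity of $g(\sigma)\cap g(\tau)$ in $\R^{2k}$. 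Any residual pair of excess intersections between the same two faces is removed by a PL Whitney move, which is available in this range since $k>1$ and which leaves the parity of every other pair of intersections unchanged.

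The main obstacle is the bookkeeping in the presence of boundary: the ball $B$, the fillings $c_\sigma$, and every Whitney or finger move must be performed inside $\mathrm{int}\,\M$ and within a regular neighborhood of the current image, so that nothing is disturbed near $\partial\M$ and no already-simplified pair of non-adjacent faces is reintroduced. This is precisely Skopenkov's framework once one observes, as the authors do, that all of his constructions remain in the PL category and therefore accommodate manifolds with boundary verbatim.
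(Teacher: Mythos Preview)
The paper does not give its own proof of this theorem. It is stated explicitly as ``a reformulation of a result of Skopenkov, reformulated from~\cite[Theorem~1.1.5]{skopenkov} via~\cite[Lemma~2.1.1]{skopenkov} and the observation that Skopenkov only uses PL-maps,'' and is then used as a black box in the proof of Theorem~\ref{t:forbidden}. So there is no ``paper's own proof'' to compare against; your attempt is a sketch of Skopenkov's argument rather than of anything the authors do.

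That said, your sketch has a real gap in the $(i)\Rightarrow(ii)$ direction. You assert that in $\R^{2k}$ ``each pairwise intersection number of non-adjacent $k$-faces can be independently adjusted to any parity using finger moves.'' This is false. A finger move (pushing a small $k$-disk through another $k$-face) changes the intersection count by~$2$ and hence does not change parity at all. The moves that \emph{do} change parity---pushing the image of a $(k-1)$-face across a $k$-face---simultaneously flip the parity of \emph{every} $k$-face containing that $(k-1)$-face against the given $k$-face, so the parities are coupled. Indeed, if independent adjustment were possible, you could make all parities even for $\Delta_{2k+2}^{(k)}$ in $\R^{2k}$, contradicting the very van Kampen--Flores obstruction (Theorem~\ref{t:homological-vKF}) that the whole section relies on. The actual content of Skopenkov's equivalence is that the $\Z_2$-valued obstruction cocycle for $L$ in $\M$ differs from the one in $\R^{2k}$ by a coboundary plus a term of the form $(\sigma,\tau)\mapsto \cap_{\M}(\alpha(\sigma),\alpha(\tau))$; proving this requires tracking how the obstruction changes under the moves, not asserting that all parities are freely adjustable. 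Your $(ii)\Rightarrow(i)$ direction is closer in spirit, but the sentence about the ``purely combinatorial contribution \ldots\ independent of $\M$'' is doing all the work and is not justified.
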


\subsection{A homological Hanani-Tutte theorem} 

Let us now prove Theorem~\ref{t:HT}. Let $K$ be a simplicial complex of dimension $k > 1$ and let $T$ be a triangulation of a compact PL manifold (possibly with boundary) $\M$ of dimension $2k$. Let $f:C_*(K;\Z_2)\to C_*(T; \Z_2)$ be a non-trivial chain map in general position such that the images of any two non-adjacent $k$-faces $\sigma,\tau\in K$ intersect evenly.

\medskip

Our goal is to prove that $K$ is a homological minor of $\M$. This requires repeatedly subdividing the triangulation $T$. In what follows, every time we subdivide a triangulation $T_i$ into $T_{i+1}$, all chain maps to $T_i$ and subcomplexes of $T_i$ are also subdivided to $T_{i+1}$. Also, throughout the proof we identify every pure $\ell$-dimensional simplicial complex with the unique $\ell$-chain with coefficients in $\Z_2$ it supports; we abuse the terminology and say that we add (pure) simplicial complexes over $\Z_2$ to mean that we add the corresponding chains.

\medskip

If $f$ is a homological almost-embedding we are done. Otherwise, there exist some non-adjacent $k$-faces $\sigma,\tau \in K$ such that $f(\sigma)\cap f(\tau)$ is non-empty. By assumption, this intersection is a set of vertices of even size, so let $x,y\in f(\sigma)\cap f(\tau)$ be two distinct such vertices. Recall that $f$ is in general position, $f(\sigma)$ and $f(\tau)$ intersect generically in $x$ and in~$y$. We set out to construct a refinement $T'$ of $T$ and a new map $f':C_*(K; \Z_2)\to C_*(T'; \Z_2)$ that is also in general position, differs from $f$ only on $\sigma$ and $\tau$, and satisfies $f'(\sigma)\cap f'(\tau) = \pth{f(\sigma)\cap f(\tau)} \setminus \{x,y\}$ as well as $f'(\alpha)\cap f'(\beta) = f(\alpha)\cap f(\beta)$ for any $\alpha \in \{\sigma,\tau\}$ and any $k$-face $\beta \in K \setminus \{\sigma,\tau\}$. Iterating this procedure produces the announced homological almost-embedding of $K$ into a triangulation of~$\mathcal{M}$.

\begin{figure}[ht]
  \begin{center}\includegraphics[width=0.7\textwidth,keepaspectratio,page=1]{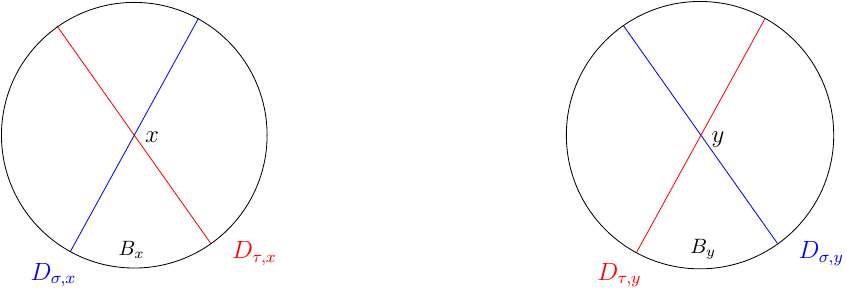}
  \end{center}
  \caption{The setup for the construction of $T'$ and $f'$.}
\end{figure}

Let $B_x$ and $B_y$ denote the closed stars of $x$ and $y$ in $T$. For $z \in \{x,y\}$ and $\alpha \in \{\sigma, \tau\}$ let $D_{\alpha,z}:= \supp(f(\alpha))\cap B_z$. Since $f(\sigma)$ and $f(\tau)$ intersect generically in $x$ and $y$, the  complexes  $D_{\sigma,x}, D_{\sigma,y}, D_{\tau,x}$ and $D_{\tau,y}$ are $k$-dimensional balls, the $(k-1)$-spheres $\partial D_{\sigma,x}$ and $\partial D_{\tau,x}$ have linking number $1$ in $\partial B_x$, and similarly $\partial D_{\sigma,y}$ and $\partial D_{\tau,y}$ have linking number~$1$ in $\partial B_y$.

\medskip  

We subdivide $T$ into $T_1$ so that the closed star $B'_x$ of $x$ and $B_y'$ of $y$ in $T_1$ are disjoint from $\partial B_x$ and $\partial B_y$, respectively. In particular, for $z \in \{x,y\}$ and $\alpha \in \{\sigma, \tau\}$, letting $D_{\alpha,z}':= \supp(f(\alpha))\cap B_z'$, the pair $(\partial D_{\alpha,z}', \partial B'_z)$ is homeomorphic to $(\s^{k-1},\s^{2k-1})$. The genericity of $x$ and $y$ in $f(\sigma) \cap f(\tau)$ is preserved through the subdivision $T \to T_1$ so  the four complexes  $D_{\bullet,\bullet}'$ are $k$-dimensional balls and their bounding spheres have the same linking numbers in $\partial B_x'$ and $\partial B_y'$ as their counterparts on $\partial B_x$ and $\partial B_y$.

\begin{figure}[ht]
  \begin{center} \includegraphics[width=0.7\textwidth,keepaspectratio,page=2]{HT-vf.pdf}
  \end{center}
  \caption{The construction of $B'$.}
\end{figure}

Up to subdividing $T_1$ into $T_2$, there exist vertices $v_x'$ and $v_y'$ in $\partial B_x'$ and $\partial B_y'$, respectively, and a path $P'$ in the 1-skeleton of $T_2$ from $v_x'$ to $v_y'$ such that, letting $N(P')$ denote the closed star of $P'$ in $T_2$, the closed star of $N(P')$ is disjoint from the image of $f$. The union $B'  := B_x'\cup N(P') \cup B_y'$ is a ball. Observe that in $\partial B'$, the $(k-1)$-spheres $\partial D_{\sigma,x}'$ and $\partial D_{\tau,x}'$ retain the linking number $1$ that they have on $\partial B_x'$. Similarly, in $\partial B'$, the $(k-1)$-spheres $\partial D_{\sigma,y}'$ and $\partial D_{\tau,y}'$ retain the linking number $1$ that they have on $\partial B_y'$. 

\begin{figure}[ht]
  \begin{center}  \includegraphics[width=0.7\textwidth,keepaspectratio,page=3]{HT-vf.pdf}
  \end{center}
  \caption{The path $P_\sigma$.}
\end{figure}

Thus, up to further subdividing $T_2$ into $T_3$, there exist a path $P_\sigma$ in the 1-skeleton of $\partial B'$ that connects $\partial D_{\sigma,x}'$ to $\partial D_{\sigma,y}'$ and with relative interiors disjoint from the image of $f$. (Here we use that $\partial D_{\alpha,z}'$ is of codimension $k \ge 2$ in $\partial B'$.) Up to subdividing the triangulation further, we can pipe $D_{\sigma,x}$ and $D_{\sigma,y}$ together by a tube $F_{\sigma}$ found in a neighborhood of the path $P_{\sigma}$~\cite[$\S 5.10$]{Rourke1972IntroductionTP}. The tube $\pth{F_\sigma, F_\sigma \cap D_{\sigma,x}, F_\sigma \cap D_{\sigma,y}}$ is homeomorphic to $\pth{\mathbb{D}^{k}\times [0,1], \mathbb{D}^{k}\times\{0\}, \mathbb{D}^{k}\times\{1\}}$. Moreover, by the (PL) general position theorem for embeddings~\cite[$\S$5.3]{Rourke1972IntroductionTP}, we can take $F_{\sigma}$ such that $\partial B'$ and $\partial F_{\sigma}$ intersect in a generic way. By taking $F_{\sigma}$ in a sufficiently small neighborhood $U$ of $P_{\sigma}$ so that $(U,\partial B'\cap U)$ is homeomorphic to $(\R^{2k},\R^{2k-1}\times\set{0})$, and the triple $\pth{F_\sigma \cap \partial B', F_\sigma \cap D_{\sigma,x}  \cap \partial B', F_\sigma \cap D_{\sigma,y}  \cap \partial B'}$ is homeomorphic to $\pth{\mathbb{D}^{k-1}\times [0,1], \mathbb{D}^{k-1}\times\{0\}, \mathbb{D}^{k-1}\times\{1\}}$.

\begin{figure}[ht]
  \begin{center}  \includegraphics[width=0.7\textwidth,keepaspectratio,page=4]{HT-vf.pdf}
  \end{center}
  \caption{The piping $F_\sigma$ between $D_{\sigma,x}$ and $D_{\sigma,y}$ in a neighborhood of $P_{\sigma}$.}
\end{figure}

Now, let $C_\sigma$ be the sum $D_{\sigma,x} + \partial F_\sigma + D_{\sigma,y}$ (over $\Z_2$). Note that $C_\sigma$ is contained in the union of $B_x \cup B_y$ and the closed star of $N(P')$, and therefore intersects the image of $f$ only inside $B_x \cup B_y$. We note that $F_\sigma \cap \partial B'$ also pipes the $(k-1)$-spheres $D_{\sigma,x}  \cap \partial B'$ and $D_{\sigma,y}  \cap \partial B'$. Indeed, $\pth{F_\sigma \cap \partial B', F_\sigma \cap D_{\sigma,x}  \cap \partial B', F_\sigma \cap D_{\sigma,y}  \cap \partial B'} = (F_\sigma \cap \partial B', F_\sigma \cap \partial(D_{\sigma,x}  \cap B')$, $F_\sigma \cap \partial(D_{\sigma,y}  \cap  B'))$ is homeomorphic to $\pth{\mathbb{D}^{k-1}\times [0,1], \mathbb{D}^{k-1}\times\{0\}, \mathbb{D}^{k-1}\times\{1\}}$. It follows that $C_\sigma \cap \partial B'$ is homeomorphic to the connected sum of two  $(k-1)$-spheres, and is therefore homeomorphic to a $(k-1)$-sphere.

\begin{figure}[ht]
  \begin{center}   \includegraphics[width=0.7\textwidth,keepaspectratio,page=5]{HT-vf.pdf}
  \end{center}
  \caption{The chain $C_\sigma = D_{\sigma,x} + \partial F_\sigma + D_{\sigma,y}$ over $\Z_2$.}
\end{figure}

We claim that, in $\partial B'$, the linking number between $C_{\sigma}\cap\partial B'$ and $D_{\tau,x}\cap\partial B'$ equals the linking number between $D_{\sigma,x}\cap\partial B'$ and $D_{\tau,x}\cap\partial B'$, that is $1$. This follows from the fact that the chain $C_\sigma$ differs from $D_{\sigma,x}$ by $\partial F_\sigma + D_{\sigma,y}$, and that each of $\partial F_{\sigma}\cap\partial B'$ and $D_{\sigma,y}\cap\partial B'$ is a boundary in $\partial B'\setminus D_{\tau,x}$. The same claim holds if we exchange $x$ for $y$. 

\begin{figure}[ht]
  \begin{center} \includegraphics[width=0.7\textwidth,keepaspectratio,page=6]{HT-vf.pdf}
  \end{center}
  \caption{The chain $\eta\in C_k(\partial B'\setminus C_{\sigma};\Z_2)$ such that $\partial \eta = D_{\tau,x}\cap\partial B'+ D_{\tau,y}\cap\partial B'$ over $\Z_2$.}
\end{figure}

Altogether, we get that $D_{\tau,x}\cap\partial B'$ and $D_{\tau,y}\cap\partial B'$ are in the same homology class in $\partial B'\setminus C_{\sigma}$. Hence, their sum is a boundary, and there exists a chain $\eta\in C_k(\partial B'\setminus C_{\sigma};\Z_2)$ such that the support of $\partial \eta$ is the sum over $\Z_2$ of $D_{\tau,x}\cap\partial B'$ and $D_{\tau,y}\cap\partial B'$. We finally set 
\[ f'(\sigma) = \hspace{-0.9cm}\underbrace{f(\sigma)+D_{\sigma,x} + D_{\sigma,y}}_{f(\sigma) \text{ with its restriction to $B_x \cup B_y$ removed}} \hspace{-1cm}+ C_{\sigma} \qquad \text{and} \qquad f'(\tau)= \hspace{-0.9cm} \underbrace{f(\tau)+D_{\tau,x}'+D_{\tau,y}'}_{f(\tau) \text{ with its restriction to $B'$ removed}}\hspace{-0.9cm}+\eta\]

\begin{figure}[ht]
  \begin{center}  \includegraphics[width=0.7\textwidth,keepaspectratio,page=7]{HT-vf.pdf}
  \end{center}
  \caption{The chains $C_\sigma$ and $\eta$ reroute $f(\sigma)$ and $f(\tau)$ so as to remove intersections in $B_x$ and~$B_y$.}
\end{figure}

\noindent
We set $f'(\omega)=f(\omega)$ for every other face $\omega\in K$. Extending $f'$ linearly yields a chain map, since both $f'(\sigma)-f(\sigma)$ and $f'(\tau)-f(\tau)$ are cycles \footnote{They are even boundaries, ensuring that $f'$ is chain homotopic to $f$}. The chain map $f'$ is as announced: $f'(\sigma)\cap f'(\tau) = f(\sigma)\cap f(\tau) \setminus \{x,y\}$ and every other intersection remains unchanged. This concludes the proof of Theorem~\ref{t:HT}.

\subsection{Forbidden homological minors for manifolds}

We now prove Theorem~\ref{t:forbidden}. First, note that the odd-dimensional case reduces to the even-dimensional one. Indeed, for every $k \ge 2$, if $\M$ is a compact, $(k-1)$-connected, $(2k-1)$-dimensional PL manifold (possibly with boundary), then $\M \times [0,1]$ is a compact, $(k-1)$-connected, $2k$-dimensional PL manifold (possibly with boundary). Moreover, $\beta_{k}(\M \times [0,1];\Z_2) = \beta_{k}(\M;\Z_2)$ and any homological minor of $\M$ is a homological minor of $\M \times [0,1]$. If the statement holds for $d$ even and $b \in \N$ with some $N(b,d)$, then it holds with $d-1$ and $b$ by putting $N(d-1,b) := N(d,b)$. So let us now consider the even-dimensional case.

\medskip

 Let $k \ge 2$ and $b \in \N$, and let $\M$ be a compact, $(k-1)$-connected, $2k$-dimensional PL manifold (possibly with boundary). Let us fix $N$ and suppose that $K = \Delta_N^{(k)}$ is a homological minor of $\M$. By Lemma~\ref{l:going-generic-simplicial}, there exist a triangulation $T$ of $\M$ and a (simplicial) homological almost-embedding $C_*(K)\to C_*(T)$. Actually, letting $S := T^{(k)}$, we have that there exists a homological almost-embedding $a:C_*(K;\Z_2) \to  C_*(S;\Z_2)$.

  \medskip

  We apply Theorem~\ref{th:skopenkov} with $L=S$. Condition~(i) holds with $T=S$ and $f$ the identity, so Condition~(ii) also holds. Hence, there exists a triangulation $R$ of $\R^{2k}$, a simplicial map $g:S \to R$ in general position, and a map $\alpha$ that sends each $k$-face of $S$ to an element of $H_k(\M;\Z_2)$ such that any two non-adjacent $k$-faces $\sigma,\tau \in S$ intersect evenly if and only if $\cap_{\M}(\alpha(\sigma),\alpha(\tau)) = 0$. We let $\tilde{\alpha}:C_k(S) \to H_k(\M;\Z_2)$ denote the linear extension of $\alpha$.

\medskip

Consider the chain map $b:C_*(K;\Z_2) \to C_*(R;\Z_2)$ defined by $b = g_\# \circ a$. Note that $b$ is nontrivial since $a$ is nontrivial and $g$ is a simplicial map. Moreover, the fact that $b$ is a chain map in general position follows from three observations:

\begin{itemize}
\item since $g$ is a simplicial map in general position, $g_\#$ is a chain map in general position,
\item since $a$ is a homological almost-embedding, it is also a chain map in general position, and 
\item the composition of a homological almost-embedding and a chain map in general position is a chain map in general position.
\end{itemize}

\noindent
Notice that for $N \ge 2k+3$, not every independent $k$-faces of $K$ can have images under $b$ that intersect evenly. Indeed Theorem~\ref{t:HT} would then imply that $K$ is a homological minor of $\R^{2k}$, which would contradict the homological van Kampen-Flores theorem (Theorem~\ref{t:homological-vKF}). We use Ramsey's theorem to show that this contradiction can be reached for some subcomplex of $K$.

\medskip

So consider two non-adjacent $k$-faces $\sigma,\tau \in K$ and put $a(\sigma) = \sigma_1+\sigma_2+\ldots +\sigma_s$ and $a(\tau) =\tau_1+\tau_2+\ldots+\tau_t$. Since $a$ is a homological almost-embedding, $\supp(a(\sigma))$ and $\supp(a(\tau))$ are disjoint, and $\sigma_i$ and $\tau_j$ are thus non-adjacent for every $(i,j) \in [s] \times [t]$. Hence, given $(i,j) \in [s] \times [t]$, $g(\sigma_i)$ and $g(\tau_j)$ intersect evenly if and only if $\cap_{\M}(\alpha(\sigma_i),\alpha(\tau_j)) = 0$. We can thus count the intersections of $b(\sigma)$ and $b(\tau)$ (the following equalities are modulo $2$ and $\card{}$ denotes the cardinal):

\[\begin{aligned}
\card{b(\sigma) \cap b(\tau)} = &\sum_{i \in [s], j \in [t]} \card{g(\sigma_i) \cap g(\tau_j)}\\
= &\sum_{i \in [s], j \in [t]} \cap_{\M}(\alpha(\sigma_i),\alpha(\tau_j))\\
= &\quad\sum_{i \in [s]} \cap_{\M}(\alpha(\sigma_i),\tilde{\alpha}(a(\tau)))= \cap_{\M}(\tilde{\alpha}(a(\sigma)),\tilde{\alpha}(a(\tau)))
\end{aligned}  \]

\medskip

Let $r = \beta_k(\M;\Z_2)$. The map $\beta: C_k(K) \to H_k(\M;\Z_2)$ defined by $\beta = \tilde{\alpha} \circ a$ induces a coloring of the $k$-simplices of $K$ by the (at most $2^r$) elements of $H_k(M;\Z_2)$. Let $N'$ denote the number of vertices of $\sd \Delta_{2k+2}^{(k)}$. By the hypergraph Ramsey theorem, for $N$ large enough (as a function of $r$ and $k$) there exists a subset $W$ of $N'$ vertices in $K$ such that $\beta$ is constant over all $k$-simplices of $K[W]$; let us denote by $\square$ this constant value. In particular, for every $k$-faces $\sigma,\tau$ of $K[W]$ such that $\sigma$ and $\tau$ are not adjacent, we have, modulo $2$, $\card{b(\sigma) \cap b(\tau)} = \cap_{\M}(\beta(\sigma_1),\beta(\tau_1)) = \cap_{\M}(\square,\square)$. Let us fix a bijection from the vertices of $\sd\Delta_{2k+2}^{(k)}$ to $W$ and extend it to a chain map $j:C_*\pth{\sd\Delta_{2k+2}^{(k)}}\to C_*(K[W])$. Also, let $h:C_*\pth{\Delta_{2k+2}^{(k)}}\to C_*\pth{\sd \Delta_{2k+2}^{(k)}}$ denote the subdivision chain map, where each $i$-face of $\Delta_{2k+2}^{(k)}$ is mapped to the sum of the $i$-faces of $\sd \Delta_{2k+2}^{(k)}$ that it contains. In particular, for every $k$-face $\sigma$ of $\Delta_{2k+2}^{(k)}$, the chain $h(\sigma)$ is supported on $k!$ $k$-faces of $\sd \Delta_{2k+2}^{(k)}$.

\medskip
 
Let us examine the properties of $b\circ j \circ h$. First, it is a nontrivial chain map (because $b$, $j$ and $h$ are). Moreover, $j \circ h$ is a homological almost-embedding (since $j$ and $h$ are), and its composition with the chain map in general position $b$ yields a chain map in general position. Furthermore, any two non-adjacent $k$-faces $\sigma,\tau \in \Delta_{2k+2}^{(k)}$ have images under $b\circ j \circ h$ that intersect evenly. To see this, let us put $j \circ h(\sigma) = \sigma_1+\sigma_2+\ldots +\sigma_s$ and $j \circ h(\tau) = \tau_1+\tau_2+\ldots+\tau_s$. Since $j \circ h$ is a homological almost-embedding, $\supp(j \circ h(\sigma))$ and $\supp(j \circ h(\tau))$ are disjoint. It follows that for every $i,j \in [s]$ the $k$-simplices $\sigma_i$ and $\tau_j$ are non-adjacent. We therefore have, modulo 2, 

\[\begin{aligned}
\card{b\circ j \circ h(\sigma) \cap b\circ j \circ h(\tau)} &= \sum_{i,j \in [s]} \card{b(\sigma_i) \cap b(\tau_j)}\\
&= \sum_{i,j \in [s]} \cap_{\M}(\beta(\sigma_i),\beta(\tau_j)) = \sum_{i,j \in [s]} \cap_{\M}(\square,\square).
\end{aligned}  \]

\noindent
To sum up, the cardinal of $b\circ j \circ h(\sigma) \cap b\circ j \circ h(\tau)$ has the same parity as $\cap_{\M}(\square,\square)s^2$. Since $s=k!$ is even, $\sigma$ and $\tau$ have images under $b\circ j \circ h$ that intersect evenly.

\medskip

To conclude, $b\circ j \circ h$ is a nontrivial chain map in general position from $\Delta_{2k+2}^{(k)}$ to $R$ such that independent faces have images that intersect evenly. By Theorem~\ref{t:HT} this means that $\Delta_{2k+2}^{(k)}$ is a homological minor of $\R^{2k}$, a contradiction with Theorem~\ref{t:homological-vKF}. Thus, for $N$ large enough, the initial hypothesis that $\Delta_N^{(k)}$ is a homological minor of $\M$ cannot be true.

\section{Graded parameters of set systems}\label{s:graded}

In this section we present our contributions on set systems of parameters. %In particular we prove Theorems~\ref{t:radgrowth} and~\ref{t:radgrowth2} and Corollary~\ref{c:itgrows!}.

\subsection{Graded Radon and Helly numbers}\label{s:graded1}

Each relation between parameters of a set system yields a relation between their graded analogues. From Levi's inequality we get the following inequality between the graded Helly and Radon numbers (defined in Section~\ref{s:cbk}):

\begin{equation}\label{eq:gradedhr}
  \forall t \in \N, \qquad \h_\F(t) =  \sup\limits_{\substack{\F'\subset \F \\ |\F'|\leq t}} \h_{\F'} \le \sup\limits_{\substack{\F'\subset \F \\ |\F'|\leq t}} \pth{\rad_{\F'}-1} = \rad_\F(t) - 1. 
\end{equation}

It follows from the definitions of graded parameters that each one is a non-decreasing function that converges to the ungraded parameter (possibly $\infty$). We notice that if a graded Helly number is asymptotically sublinear then it is bounded: 

\begin{lemma}\label{l:hellygrowth}
  Let $\F$ be a set system and $t_0 \in \N$. If $\h_\F(t) < t$ for all $t>t_0$, then $\h_\F \leq t_0$.
\end{lemma}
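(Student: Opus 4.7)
The plan is to argue by contradiction: assume $\h_\F > t_0$ and produce some $t > t_0$ violating $\h_\F(t) < t$.

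First, if $\h_\F > t_0$ then by definition there exists a finite witness $\G \subset \F$ for which every subfamily of size at most $t_0$ has nonempty intersection, while $\cap \G = \emptyset$. I would select such a $\G$ of minimum cardinality and set $n \eqdef |\G|$. Note that $n > t_0$: otherwise $\G$ would itself be a subfamily of size at most $t_0$ and would have nonempty intersection, contradicting $\cap\G = \emptyset$.

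Next, the minimality of $\G$ forces every proper subfamily $\G' \subsetneq \G$ to have nonempty intersection. Indeed, such a $\G'$ inherits from $\G$ the property that each of its subfamilies of size at most $t_0$ has nonempty intersection; if in addition $\cap \G' = \emptyset$, then $\G'$ would be a strictly smaller counterexample, contradicting our choice of $\G$. In particular, every $(n-1)$-subfamily of $\G$ has nonempty intersection.

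Testing the Helly property of $\G$ on $\G$ itself then shows that the threshold $h = n-1$ fails for $\G$: every subfamily of size at most $n-1$ intersects, yet $\G$ does not. Hence $\h_\G \ge n$, and therefore
\[ \h_\F(n) \;=\; \sup_{\substack{\F' \subset \F \\ |\F'| \le n}} \h_{\F'} \;\ge\; \h_\G \;\ge\; n. \]
Since $n > t_0$, this contradicts the standing hypothesis $\h_\F(t) < t$ for all $t > t_0$, completing the proof.

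The argument is essentially bookkeeping; the only subtle point is picking the right notion of "minimal counterexample" so that all proper subfamilies are guaranteed to intersect, which is what allows the Helly threshold for $\G$ to be pushed up to $n = |\G|$.
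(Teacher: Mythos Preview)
Your proof is correct and follows essentially the same idea as the paper's. The paper argues the contrapositive slightly differently: it records the structural fact that $\h_\F(t) \neq \h_\F(t-1)$ if and only if $\h_\F(t) = t$, and then observes by induction that the hypothesis forces $\h_\F(t) = \h_\F(t_0) \le t_0$ for all $t > t_0$. Your minimal-witness construction is exactly what underlies that structural fact, so the two arguments are the same at heart; yours is simply more explicit.
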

\begin{proof}
  By definition, we have $\h_\F(t) \le t$ for every $t \in \N$. Moreover, $\h_\F(t) \neq \h_\F(t-1)$ if and only if $\h_\F(t) = t$. The assumption and a straightforward induction therefore implies that for every $t>t_0$ we have $\h_\F(t) = \h_\F(t_0) \le t_0$.
\end{proof}

\noindent
When graded Radon numbers grow, it is at most linearly (see Appendix~\ref{app:rab}) and not too slowly:

\begin{lemma}\label{l:radongrowth}
  Let $\F$ be a set system and $t\ge 2$ an integer. If $\rad_\F(t) > \rad_\F(t-1)$, then $\rad_\F(t-1) \ge 1+\log_2\pth{1+\frac{t}{\h_\F(t)}}$.%  \chantier{$\rad_\F(t) \ge \log_2 t - \log_2 \h_\F(t)$.}
\end{lemma}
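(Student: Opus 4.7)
Write $r := \rad_\F(t-1)$ and $h := \h_\F(t)$. The assumption $\rad_\F(t) > r$ furnishes a subfamily $\F' \subseteq \F$ with $|\F'| \le t$ and $\rad_{\F'} > r$; this $\F'$ must have size exactly $t$, since otherwise $|\F'| \le t-1$ would force $\rad_{\F'} \le \rad_\F(t-1) = r$. Thus some $r$-element subset $S \subset X$ admits no $\F'$-Radon partition. The plan is to assign to each $F \in \F'$ a nontrivial partition of $S$ and then control the fibers of this assignment using the Helly number of $\F'$.

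For every $F \in \F'$, the family $\F' \setminus \{F\}$ has $t-1$ elements, hence admits some Radon partition $\pi_F = (P_1^F, P_2^F)$ of $S$. Setting $\G(\pi) := \{F' \in \F' : F' \supseteq P_1 \text{ or } F' \supseteq P_2\}$, one checks that $\bigcap \G(\pi) = \conv_{\F'}(P_1) \cap \conv_{\F'}(P_2)$, which is empty for every nontrivial unordered partition $\pi$ of $S$ by the non-Radon property of $S$ in $\F'$. By the choice of $\pi_F$, however, $\bigcap (\G(\pi_F) \setminus \{F\})$ is nonempty; hence $F \in \G(\pi_F)$ and $F$ is \emph{essential for emptiness} in $\G(\pi_F)$, meaning that its removal restores a nonempty intersection.

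Now, for each unordered partition $\pi$ of $S$ into two nonempty parts, let $B(\pi) := \{F \in \F' : \pi_F = \pi\}$; the crux is the inequality $|B(\pi)| \le h$. Since $\G(\pi) \subseteq \F'$ has at most $t$ elements and empty intersection, the Helly number bound $\h_{\F'} \le h$ yields some $\mathcal{M} \subseteq \G(\pi)$ with $|\mathcal{M}| \le h$ and $\bigcap \mathcal{M} = \emptyset$. Any essential $F \in B(\pi)$ must lie in $\mathcal{M}$, for otherwise $\mathcal{M} \subseteq \G(\pi) \setminus \{F\}$ would force $\bigcap (\G(\pi) \setminus \{F\}) = \emptyset$, contradicting essentialness; hence $|B(\pi)| \le |\mathcal{M}| \le h$. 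Since $\F' = \bigcup_\pi B(\pi)$ and $S$ has $2^{r-1}-1$ unordered partitions into two nonempty parts, we conclude $t \le h(2^{r-1}-1)$, which rearranges to $r \ge 1 + \log_2\pth{1 + t/h}$. I expect the subtle step to be the essentialness argument: one must bound the number of \emph{essential} elements of $\G(\pi)$ (rather than the size of some minimal empty-intersection subfamily), the point being that any empty-intersection subfamily of $\G(\pi)$ of size $\le h$ provided by Helly must already contain all essential elements.
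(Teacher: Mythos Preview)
Your proof is correct and follows essentially the same route as the paper's: both find a size-$t$ subfamily $\F'$ and a size-$r$ set $S$ with no $\F'$-Radon partition, assign to each $F\in\F'$ a partition $\pi_F$ witnessing that $\F'\setminus\{F\}$ does admit a Radon partition, and then pigeonhole over the $2^{r-1}-1$ unordered partitions. The only difference is cosmetic: where the paper assumes a fiber of size $h+1$ and directly exhibits a subfamily $\G'$ in which every $h$ sets intersect but the whole family does not (a direct Helly contradiction), you instead invoke Helly to produce an empty-intersection subfamily $\mathcal{M}$ of size $\le h$ and observe that every essential element of $\G(\pi)$ must lie in $\mathcal{M}$. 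These are two phrasings of the same idea.
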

\begin{proof}
  Let $X$ denote the ground set of $\F$ and let $n = \rad_\F(t-1)$. Suppose that $\rad_\F(t)>n$, so that there exist a subset $\G = \{G_1,G_2, \ldots, G_t\} \subset \F$ and a subset $S \subseteq X$ of size $n$ such that

\begin{description}
\item[(i)] there is no partition of $S$ into two parts whose $\G$-convex hulls intersect,\medskip
\item[(ii)] for every $i \in [t]$, there exists a partition $\p_i$ of $S$ into two parts whose $(\G\setminus \{G_i\})$-convex hulls intersect.
\end{description}

\noindent
Recall that given $\F' \subseteq \F$, the $\F'$-convex hull $\conv_{\F'}(P)$ of a subset $P\subset X$ is the intersection of all the members of $\F'$ that contain $P$. In particular, for any $P \subseteq X$ such that $P \not \subseteq G_{i}$ we have $\conv_\G(P) = \conv_{\G\setminus \{G_{i}\}}(P)$. Conditions~(i) and~(ii) therefore imply that every $G_i \in \G$ contains one or the other part of $\p_i$.

\medskip

There are at most $2^{n-1}-1$ partitions of~$S$ in two nonempty parts. Let us assume that $t > (2^{n-1}-1)h$ for some integer $h$, so that by the pigeonhole principle there exist $h+1$ indices $i_1,i_2, \ldots, i_{h+1}$ such that the partitions $\p_{i_1}, \p_{i_2}$, \ldots, $\p_{i_{h+1}}$ coincide. Let $\{P_1,P_2\}$ be that partition of $S$. Let us put $\G' = \{ A \in \G \colon P_1 \subseteq A \text{ or } P_2 \subseteq A\}$. We make two observations:

\begin{itemize}
\item $\cap_{A \in \G'}A$ coincides with $\conv_{\G}(P_1) \cap \conv_{\G}(P_2)$ and is therefore empty.\medskip
\item every choice of $h$ elements in $\G'$ has nonempty intersection. Indeed, $\G'$ contains $G_{i_1}$, $G_{i_2}$, \ldots, $G_{i_{h+1}}$, so that any choice of $h$ elements from $\G'$ is bound to miss $G_{i_j}$ for at least one $j \in [h+1]$ and their intersection must then contain $\conv_{\G \setminus \{G_{i_j}\}}(P_1) \cap \conv_{\G \setminus \{G_{i_j}\}}(P_2) \neq \emptyset$. 
\end{itemize}

\noindent
For $h \ge \h_\F(t)$ these conditions are incompatible. We therefore have $t \le (2^{n-1}-1) \h_\F(t)$, and the statement follows.
\end{proof}

We can now prove that any set system with sufficiently slowly growing graded Radon numbers has finite Radon number.

\begin{proof}[Proof of Theorem~\ref{t:radgrowth}]
  Let $\F$ be a set system with infinite Radon number. If the Helly number $\h_\F$ is also infinite, then by Lemma~\ref{l:hellygrowth} there exists an increasing sequence $\{t_i\}_{i \in \N}$ such that $\h_\F(t_i) = t_i$, and therefore $\rad_\F(t_i) \ge t_i+1$ by Inequality~\eqref{eq:gradedhr}; this prevents $\rad_\F(t)-\log_2 t$ from going to $-\infty$ as $t \to \infty$. So suppose that the Helly number $\h_\F$ is finite. The assumption that $\rad_\F=\infty$ ensures that there exists an increasing sequence $\{t_i\}_{i \in \N}$ such that $\rad_\F(t_i) > \rad_\F(t_i-1)$. Lemma~\ref{l:radongrowth} implies that $\rad_\F(t_i) > \rad_\F(t_i-1) \ge \log_2 t_i - \log_2 \h_\F$. Again, this prevents $\rad_\F(t)-\log_2 t$ from going to $-\infty$ as $t \to \infty$. The statement follows by contraposition.
\end{proof}

\subsection{Consequences for topological set systems}

Let us finally consider topological set systems with slowly growing homological shatter function and ground set with a forbidden homological minor.

\begin{corollary}\label{c:Patakova+}
  For every simplicial complex $K$ there exists a function $S_K : \N \to \N$ with $\lim_{t \to \infty} S_K(t) = +\infty$ such that the following holds. Any set system $\F$ whose ground set has $K$ as forbidden homological minor and satisfies $\phi_{\F}^{(\dim K)}(t) \le S_K(t)$ for $t$ large enough has finite Radon number.  
\end{corollary}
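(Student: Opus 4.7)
The plan is to combine Patáková's theorem (cited in Section~\ref{s:context} as \cite[Theorem~2.1]{patakova2022bounding}) with the growth result Theorem~\ref{t:radgrowth}. Patáková's theorem provides a function $P_K\colon \N \to \N$ such that every set system $\G$ whose ground set has $K$ as forbidden homological minor satisfies $\rad_\G \le P_K\pth{\hc_\G^{(\dim K)}}$. Replacing $P_K$ by $s \mapsto \max_{s' \le s} P_K(s')$ if necessary, I may assume $P_K$ is non-decreasing.

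Next I would transfer this bound to the graded setting. Any subsystem $\F' \subseteq \F$ shares the ground set of $\F$, hence inherits the forbidden homological minor hypothesis and satisfies $\rad_{\F'} \le P_K\pth{\hc_{\F'}^{(\dim K)}}$. Taking the supremum over $\F' \subseteq \F$ with $|\F'|\le t$, and noting that $\hc_{\F'}^{(\dim K)} \le \phi_\F^{(\dim K)}(t)$, the monotonicity of $P_K$ gives
\[
\rad_\F(t) \le P_K\pth{\phi_\F^{(\dim K)}(t)}.
\]

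To conclude, I would define $S_K$ so that the right-hand side is eventually much smaller than $\log_2 t$. A natural choice is
\[
S_K(t) \eqdef \max\set{s \in \N \colon P_K(s) \le \tfrac{1}{2}\log_2 t},
\]
with the convention $S_K(t) = 0$ when this set is empty. Since $P_K$ is finite-valued and non-decreasing, $S_K(t)$ diverges to $+\infty$. Under the hypothesis $\phi_\F^{(\dim K)}(t) \le S_K(t)$ for $t$ large enough, the previous display yields $\rad_\F(t) \le \tfrac{1}{2}\log_2 t$ eventually, whence $\rad_\F(t) - \log_2 t \to -\infty$, and Theorem~\ref{t:radgrowth} produces $\rad_\F < \infty$. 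I do not anticipate any genuine obstacle: the corollary is essentially a packaging step, the only nontrivial new input (beyond Patáková's theorem) being Theorem~\ref{t:radgrowth}. The mildly delicate point is just confirming that Patáková's bound applies uniformly to all finite subsystems with the same ground set, which follows directly from the definition of the forbidden homological minor hypothesis (a property of the ground space alone).
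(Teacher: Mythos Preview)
Your proposal is correct and follows essentially the same route as the paper: both invoke Patáková's bound on subsystems to get $\rad_\F(t) \le P_K\pth{\phi_\F^{(\dim K)}(t)}$, define $S_K(t)$ as the largest $s$ with $P_K(s) \le \tfrac12\log_2 t$, and conclude via Theorem~\ref{t:radgrowth}. The only cosmetic difference is that the paper's function $\Psi^{(K)}_{\hc\to\rad}$ is non-decreasing by construction (as a supremum over a growing class), whereas you force monotonicity by hand.
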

\begin{proof}
  Recall that $\Psi^{(K)}_{\hc \to \rad}(x)$ denotes the supremum of the Radon number of a set system with $(\dim K)$-level topological complexity at most $x$ and whose ground set has $K$ as forbidden homological minor. Patáková~\cite{patakova2022bounding} proved that $\Psi^{(K)}_{\hc \to \rad}(x)$ is finite for every $K$ and $x$.

  \medskip

  For $t \in \N$, we define $S_K(t) =\max\{x\in\N\mid \Psi^{(K)}_{\hc \to \rad}(x) \le \frac12 \log_2 t\}$. This ensures that $\Psi^{(K)}_{\hc \to \rad}(S_K(t)) \le \frac12 \log_2 t$ for every $t \in \N$.    Observe that $\lim_{t \to \infty} S_K(t) = \infty$ since $\Psi^{(K)}_{\hc \to \rad}(x)$ is finite for every $x \in \N$.

  \medskip

  Now consider a set system $\F$ with function $\phi_{\F}^{(\dim K)} \le S_K$ and whose ground set has $K$ as forbidden homological minor. Let $t \in \N$ and consider a subset $\F' \subseteq \F$ of size $t$. The ground set of $\F'$ also has $K$ as forbidden homological minor. Moreover, $\F'$ has  $(\dim K)$-level topological complexity at most $\phi_{\F}^{(\dim K)}(t) \le S_K(t)$. It follows that $\rad_{\F'} \le \Psi^{(K)}_{\hc \to \rad}(S_K(t)) \le \frac12 \log_2 t$. This holds for every $\F' \subseteq \F$ of size $t$, so $\rad_\F(t) \le \frac12 \log_2 t$. This inequality holds for every $t \in \N$ so Theorem~\ref{t:radgrowth} implies that $\F$ has bounded Radon number.
\end{proof}

\noindent
We can finally prove a fractional Helly theorem for diverging homological shatter functions.

\begin{proof}[Proof of Corollary~\ref{c:itgrows!}]
  Recall that $\Psi_{\rad \to \fh}(y)$ denotes the supremum of the fractional Helly number of a set system with Radon number $y$. Holmsen and Lee~\cite[Theorem~1.1]{Holmsen2021} proved that $\Psi_{\rad \to \fh}(y)$ is finite for every $y$. Let $S_K$ denote the function from Corollary~\ref{c:Patakova+}. Now consider a set system $\F$ whose ground set has $K$ as forbidden homological minor and satisfies $\phi_{\F}^{(\dim K)} \le S_K$. Corollary~\ref{c:Patakova+} ensures that $\rad_\F$, the Radon number of $\F$, is finite. It follows that the fractional Helly number of $\F$ is at most $\Psi_{\rad \to \fh}(\rad_\F)$, and is therefore finite. From there, \cite[Theorem~1.2]{steppingUp} ensures that this fractional Helly number is at most $\mu(K)+1$.
\end{proof}

\subsection{Other graded parameters and relations}\label{s:graded2}

With the intersection hypergraph of $\F$ in mind, we say that a set $\G$ in a set system $\F$ is a \defn{clique} if the intersection of all members of $\G$ is non-empty. The colorful Helly theorem suggests the following parameter:

\begin{itemize}
\item The \defn{colorful Helly number} $\ch_\F$ of $\F$ is the smallest number of colors $m$ such that for every coloring of a subfamily $\F'\subset \F$ with $m$ colors, if every subfamily that contains exactly one element of each color is a clique, then at least one color class is a clique.
\end{itemize}

\noindent
We say that a set $\G$ in a set system $\F$ is a \defn{$c$-wise clique} if every $c$-element subset of $\G$ is a clique. Clearly a clique is a $c$-wise clique, and when $c \ge \h_\F$ the converse is true. To analyze  set systems with large, infinite or unknown Helly numbers, it is useful to consider variants of the colorful and fractional Helly numbers where cliques are replaced by $c$-wise cliques:

\begin{itemize}
\item The \defn{$c$th colorful Helly number} $\ch^{(c)}_\F$ of $\F$ is the smallest number of colors $m \geq c$ such that for every coloring of a subfamily $\F'\subset \F$ with $m$ colors, if every subfamily of $\F'$ that contains exactly one element of each color forms a $c$-wise clique, then at least one color class is a $c$-wise clique.\medskip
 
\item The \defn{$c$th fractional Helly number} $\fh^{(c)}_\F$ of $\F$ is the smallest integer $s$ such that there exists a function $\beta_\F:(0,1)\rightarrow (0,1)$ with the following property: For every finite subfamily $\F'\subset \F$, whenever a fraction $\alpha$ of the $s$-tuples of $\F'$ forms a $c$-wise clique, a subset $\G$ of $\F'$ of size $\beta_\F(\alpha)|\F'|$ forms a $c$-wise clique. \medskip
\end{itemize}

\noindent
Obviously for every set system $\F$, if $c \ge \h_\F$, then $\fh^{(c)}_\F = \fh_\F$ and $\ch^{(c)}_\F = \ch_\F$. Holmsen~\cite[Theorem~1.2]{Holmsen2020} proved that $\fh^{(c)}_\F\leq \ch^{(c)}_\F$ for every set system $\F$ and every $c\in\N$. A close inspection of that proof provides another bridge between the graded and ungraded parameters:

\begin{lemma}\label{l:holmsen-graded}
Let $\ell > c$ be integers. Every set system $\F$ such that $\ch^{(c)}_\F(c\ell) \leq \ell$ satisfies $\fh^{(c)}_\F \leq \ch^{(c)}_\F(c\ell)$.
\end{lemma}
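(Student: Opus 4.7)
The plan is to revisit Holmsen's derivation of $\fh^{(c)}_\F \leq \ch^{(c)}_\F$ from~\cite[Theorem~1.2]{Holmsen2020} and to keep careful track of the size of the subfamilies to which the colorful Helly property is invoked. I claim that every such invocation is on a subfamily of size at most $c\ell$, so the graded bound $\ch^{(c)}_\F(c\ell) \le \ell$ is enough to drive the argument in place of the ungraded bound $\ch^{(c)}_\F \le \ell$ used by Holmsen.

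Concretely, given a finite $\F' \subseteq \F$ of size $n$ in which a fraction $\alpha$ of the $\ell$-element subsets forms a $c$-wise clique, the goal is to produce a $c$-wise clique in $\F'$ of size $\beta(\alpha)\,n$. Holmsen's argument performs an averaging- and Ramsey-type reduction that extracts from $\F'$ a structured configuration: $\ell$ pairwise-disjoint color classes $C_1,\ldots,C_\ell$, each of size at most $c$, such that every rainbow transversal $(F_1,\ldots,F_\ell) \in C_1 \times \cdots \times C_\ell$ is a $c$-wise clique. The union $\HH := C_1 \cup \cdots \cup C_\ell$ is then a subfamily of $\F'$ of size at most $c\ell$ equipped with an $\ell$-coloring for which every rainbow subfamily is a $c$-wise clique. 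Since $\ch^{(c)}_{\HH} \leq \ch^{(c)}_\F(c\ell) \leq \ell$ by hypothesis, the colorful Helly property applies to $\HH$ and forces one of the $C_i$ to be a $c$-wise clique. A standard double-counting step, carried out over the many structured configurations inside $\F'$ yielded by the reduction, then upgrades these small $c$-wise cliques into a single one of size linear in $n$.

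The main obstacle is to verify the quantitative details of Holmsen's reduction: one must check both that each color class $C_i$ produced by the Ramsey step has size at most $c$ (so that $|\HH| \le c\ell$) and that the aggregation of many small $c$-wise cliques into a linear-size one never requires an application of colorful Helly on a subfamily larger than $c\ell$. Since Holmsen's argument is formulated purely in terms of the abstract $c$-wise intersection hypergraph, this bookkeeping should be mechanical once the original proof is spelled out, but it is the delicate point that justifies the graded strengthening claimed here.
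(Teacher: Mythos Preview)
Your approach is essentially the paper's, and the key observation is correct: the colorful Helly property need only be invoked on subfamilies of size at most $c\ell$. The paper packages this more economically, however. Rather than tracing through Holmsen's argument, it sets $\ell' := \ch^{(c)}_\F(c\ell)$ and observes that the graded bound is precisely the statement that the $c$-uniform intersection hypergraph of $\F$ contains no \emph{complete $\ell'$-tuple of missing edges} in the sense of~\cite[\S3]{Holmsen2021}, a pattern living on $c\ell'$ vertices; then~\cite[Theorem~1.2]{Holmsen2020} is invoked as a black box to conclude $\fh^{(c)}_\F \le \ell'$. This sidesteps entirely the bookkeeping you flag as ``the delicate point''.

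One minor slip: as written, your sketch targets $\fh^{(c)}_\F \le \ell$ (you start from a positive fraction of $\ell$-element $c$-wise cliques and produce $\ell$ color classes), whereas the lemma asserts the sharper $\fh^{(c)}_\F \le \ell' = \ch^{(c)}_\F(c\ell) \le \ell$. The fix is to run your argument with $\ell'$ in place of $\ell$ throughout; since $c\ell' \le c\ell$, the graded hypothesis still covers the subfamily $\HH$.
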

\begin{proof}
  Let $c< \ell$, let $\F$ be a set system and let $\ell':=\ch^{(c)}_\F(c\ell)$. By definition of $\ch^{(c)}_\F(\cdot)$, the $c$-uniform hypergraph recording which $c$-element subsets of $\F$ form cliques cannot contain a certain pattern on $c\ell'$ vertices, namely the \emph{complete $\ell'$-tuples of missing edges}~\cite[$\mathsection 3$]{Holmsen2021}. This is the only property needed to ensure that $\fh^{(c)}_\F \leq \ell'$~\cite[Theorem~1.2]{Holmsen2020}.
\end{proof}

\medskip

  Let $\Psi_{\rad \to \ch}^{(c)}(x)$ denote the supremum of the $c$th colorful Helly number of a set system with Radon number $x$. Holmsen and Lee~\cite[Theorem~2.2]{Holmsen2021} proved that $\Psi_{\rad \to \ch}^{(c)}(x) \le \max\pth{\Xi(x),c}$ for every $c \ge x-1$, where  $\Xi: \N \to \N$ is the function defined by $\Xi(r) := r^{r^{\lceil \log_2r \rceil}+r\lceil \log_2r \rceil}$. Applying this inequality to subsets of size $t$ we get:
\begin{align}\label{eq:colorful-helly-g}
  \ch^{(c)}_\F(t) \leq \max\pth{\Xi\pth{\rad_\F(t)},c} && \text{for every } \F, c, t \text{ such that } c \ge \h_\F(t).
\end{align}

\medskip

We can now prove that sufficiently slowly growing graded Radon numbers imply a bounded Fractional Helly number.

\begin{proof}[Proof of Theorem~\ref{t:radgrowth2}]
  Let $\Psi: \N\rightarrow \N$ and $t_0 \in \N$ such that $\Psi(t) < t+1$ for every $t \ge t_0$. Also suppose that there exists an integer $t_1 \ge {t_0}^2$ such that $\Xi\pth{\Psi(t_1)} < \frac{t_1}{t_0}$. We now consider a set system $\F$ such that $\rad_\F(t) \le \Psi(t)$ for every $t \in \N$ and argue that $\fh_\F$, the fractional Helly number of $\F$, is bounded. 

  \medskip

  By Levi's inequality~\eqref{eq:gradedhr} we have $\h_\F(t) \le \rad_\F(t) -1 \le \Psi(t) -1$. It follows that $\h_\F(t) < t$ for every $t \ge t_0$, so by Lemma~\ref{l:hellygrowth} we have $\h_\F \le t_0$. Hence, the graded version~\eqref{eq:colorful-helly-g} of the Holmsen-Lee inequality applies with $c=t_0$ and every $t$, that is $\ch^{(t_0)}_\F(t) \leq \max\pth{\Xi\pth{\Psi(t)},t_0}$.

  \medskip

  Let us apply Lemma~\ref{l:holmsen-graded} with $c=t_0$ and $\ell = \frac{t_1}{t_0}$. Observe that $\ell > c$ holds because $t_1 > {t_0}^2$, and that $\ch^{(c)}_\F(c\ell) \le \ell$ follows from the assumption that $\Xi\pth{\Psi(t_1)} \le \frac{t_1}{t_0}$, as
  \[  \ch^{(c)}_\F(c\ell) = \ch^{(t_0)}_\F\pth{t_1} \le \max\pth{\Xi\pth{\Psi(t_1)},t_0} \le \frac{t_1}{t_0} = \ell.\]
Hence, $\fh^{(t_0)}_\F \leq \ch^{(t_0)}_\F(t_1)$. Since $\h_\F \le t_0$ we have $\fh^{(t_0)}_\F = \fh_\F$ and the statement follows.  
\end{proof}

\appendix

\section{Homological minors of manifolds and their triangulations}\label{app:simpapp}

Here we give the proof of:

\setcounter{replemma}{7}
\begin{replemma}
    A simplicial complex $K$ is a homological minor of a compact PL manifold (possibly with boundary) $\M$ if and only if $K$ is a homological minor of some triangulation of~$\M$.
%A simplicial complex $K$ is a homological minor of a manifold $\M$ if and only if $K$ is a homological minor of some triangulation $T$ of $\M$.
\end{replemma}
\begin{proof}
  Let $K$ and $L$ be simplicial complexes. A map $\tau':|K|\to |L|$ is \emph{an approximation} of a map $\tau:|K|\to |L|$ with respect to $L$ if for every $x\in |K|$, $\tau'(x)$ belongs to the inclusion-minimal closed simplex of $L$ containing the point $\tau(x)$. (See~\cite{zeeman1964relative}.) The relative simplicial approximation theorem~\cite{zeeman1964relative} states that any continuous map into $|L|$ whose restriction to the boundary is simplicial admits an approximation (with respect to $L$) that is a simplicial map leaving the boundary unchanged. Moreover, it gives a relative homotopy (fixing the boundary) between the continuous map and the simplicial approximation, which remains an approximation (with respect to $L$) for every $t\in[0,1]$.

\medskip
  
Now, let $a:C_*(K)\to\Csing(|T|)$ be a homological almost-embedding and let $T$ be a triangulation of $\M$ fine enough that no closed simplex of $T$ intersects the supports $\supp(a(\sigma))$ and $\supp(a(\tau))$ of two non-adjacent faces $\sigma,\tau$ of $K$. We apply the relative simplicial approximation theorem to construct a simplicial chain map $b \colon C_*(K)\longrightarrow C_*(T)$ satisfying the following property: for every $\sigma\in K$, denoting $a(\sigma)=\tau_0+\tau_1+\cdots+\tau_s$, we have $b(\sigma)=\tau_0'+\tau_1'+\cdots+\tau_s'$, where each $\tau_i'$ is a simplicial approximation of the singular simplex $\tau_i$. Our choice of $T$ then guarantees that $b$ is again a homological almost-embedding.

\medskip

Observe that the chain map $a$ is supported on a finite number of singular simplices $\tau:\Delta^i\to \M$. We inductively construct simplicial approximations of all such map, starting from $i=0$ to $i=\dim(K)$. In the process, we make sure that the approximation $\tau'$ of each singular simplex $\tau$ is compatible with the other approximations in the following sense: if $\partial \tau=\omega_0+\omega_1+\ldots+\omega_t$, we require the approximation $\tau'$ to satisfy $\partial\circ \tau'= \omega_0'+\omega_1'+\ldots+\omega_t'$, where the $\omega_j'$ are the previously chosen approximations.

\medskip

Specifically, suppose that we are processing a map $\tau:\Delta^t\to \M$. Let $\omega_0,\omega_1, \ldots, \omega_t$ denote the facets of $\tau$. By induction, we assume that we already computed a suitable simplicial approximation $\omega_i'$ of $\omega_i$ for $0 \le i \le t$. We cannot apply the relative simplicial approximation theorem to $\tau$ directly, as it may not be simplicial on the boundary $\partial \Delta^t$. We therefore define an auxiliary map $\widetilde \tau:\Delta^t\to\M$ with boundary is $\omega_0'+ \omega_1' + \ldots+\omega_t'$ by concatenation of the map $\tau$ and the homotopy between $\tau\circ\partial$ and $\tau'\circ\partial$ given by the previous use of simplicial approximation theorem to each map constituting $\partial\tau$. We then apply the relative simplicial approximation theorem to  $\widetilde \tau$. The simplicial map we obtain satisfies our compatibility condition.

\medskip

So suppose that we have our collection of simplicial approximations of all singular simplices in the support of the chain map $a$, where the simplicial approximations of $\tau$ being denoted $\tau'$. For every $\sigma\in K$, letting $a(\sigma)=\tau_1+\cdots+\tau_s$, we put $b(\sigma)=\tau_1'+\cdots+\tau_s'$. It is a chain map by the compatibility of our simplicial approximations, it is nontrivial because $a$ is, and the images of independent simplices have disjoint supports by our choice of $T$ and the property of simplicial approximations. Altogether, $b$ is a homological almost-embedding of $K$ into $T$.
\end{proof}

\section{Theorem~\ref{t:forbidden} for relaxed connectivity assumptions}\label{app:relax}

In this appendix we relax the conditions in Theorem~\ref{t:forbidden} that the $d$-manifold be $(\lceil d/2 \rceil -1)$-connected. 

\medskip

A classical method to increase connectivity of a manifold is to use surgeries. Let $\M$ a $d$-manifold and let $p \le d$. A manifold~$\M'$ is \defn{produced by a $p$-surgery} from $\M$ if there exists an embedding $\phi:\mathbb{S}^p\times \mathbb{D}^{q}\to \M$, with $p+q=d$, such that $\M'= \pth{M\backslash \mathring{\pth{\im(\phi)}}}\cup_{f_{|\mathbb{S}^p\times\mathbb{S}^{q-1}}}\pth{\mathbb{D}^{p+1}\times\mathbb{S}^{q-1}}$. It turns out that surgeries do not destroy homological minors of sufficiently small dimension.

\begin{lemma}\label{l:surgery-stability-minors}
  Let $K$ be a $k$-dimensional homological minor of a $2k$-manifold (possibly with boundary) $\M$. If $\M'$ is a manifold produced by a $p$-surgery from $\M$,
  with $p < k$, then $K$ is a homological minor of~$\M'$.
\end{lemma}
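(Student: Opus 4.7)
The plan is to transport the given homological almost-embedding of $K$ into $\M$ to one into $\M'$ by first pushing it off the low-dimensional ``core sphere'' $C:=\phi(\mathbb{S}^p\times\set{0})$ of the surgery, and then composing with an explicit continuous injection $\psi\colon\M\setminus C\to\M'$. Observe that $\dim C=p$ and $\dim a(K)\le k$, and $k+p<2k=\dim\M$, so general position is available for this first step.

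First, by Lemma~\ref{l:going-generic-simplicial} I may take the homological almost-embedding to be a simplicial chain map $a\colon C_*(K;\Z_2)\to C_*(T;\Z_2)$ for some triangulation $T$ of $\M$; after subdividing $T$ I may further assume that $\phi\colon\mathbb{S}^p\times\mathbb{D}^q\to\M$ (with $p+q=2k$) is simplicial. Then $C$ is a $p$-dimensional subcomplex of $T$, while the union $X:=\bigcup_{\sigma\in K}\supp(a(\sigma))$ is a subcomplex of $T$ of dimension at most $k$. Since $\dim X+\dim C\le k+p<2k$, the PL general position theorem~\cite[$\S 5$]{Rourke1972IntroductionTP} supplies an ambient PL isotopy of $\M$ whose terminal homeomorphism $h$ satisfies $h(X)\cap C=\emptyset$. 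Replacing $a$ by $h_\#\circ a$, I may then assume $\supp(a(\sigma))\cap C=\emptyset$ for every $\sigma\in K$; nontriviality and disjointness of supports for non-adjacent faces are preserved because $h$ is a homeomorphism.

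The heart of the argument is the construction of $\psi$. On $\M\setminus\mathring{\im(\phi)}$, let $\psi$ be the natural inclusion into $\M'$; on $\im(\phi)\setminus C\cong\mathbb{S}^p\times(\mathbb{D}^q\setminus\set{0})$, set
\[\psi(s,v):=\pth{\,|v|\cdot s,\ v/|v|\,}\in\mathbb{D}^{p+1}\times\mathbb{S}^{q-1},\]
followed by the inclusion into the handle $\mathbb{D}^{p+1}\times\mathbb{S}^{q-1}$ glued to $\M\setminus\mathring{\im(\phi)}$ during the surgery. The two definitions agree on the gluing boundary $\phi(\mathbb{S}^p\times\mathbb{S}^{q-1})$, where $|v|=1$ and $\psi(s,v)=(s,v)$, so $\psi$ is well-defined and continuous. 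The explicit inverse $(w,u)\mapsto(w/|w|,\ |w|u)$ (defined for $w\neq 0$) combined with the identity on $\M\setminus\mathring{\im(\phi)}$ shows $\psi$ is injective; in fact it is a homeomorphism onto $\M'$ minus the ``new core'' $\set{0}\times\mathbb{S}^{q-1}$.

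Finally, the composition $\psi_\#\circ a\colon C_*(K;\Z_2)\to\Csing(\M';\Z_2)$ is a chain map. Since $\psi$ is a continuous injection, $\supp(\psi_\#(c))=\psi(\supp(c))$ for every singular chain $c$ supported in $\M\setminus C$. This preserves the odd size of $\supp(a(v))$ for every vertex $v$ of $K$ (nontriviality) as well as disjointness of supports for non-adjacent face pairs, so $\psi_\#\circ a$ is a homological almost-embedding of $K$ into $\M'$, as required. The main obstacle I anticipate is the coordinate bookkeeping for $\psi$ across the surgery gluing and the verification that it is a homeomorphism onto its image; the general position step, while essential, is standard once we have reduced to the simplicial category via Lemma~\ref{l:going-generic-simplicial}.
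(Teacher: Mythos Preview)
Your proof is correct and follows essentially the same approach as the paper: push the almost-embedding off the core sphere $\phi(\mathbb{S}^p\times\{0\})$ by PL general position (using $k+p<2k$), then compose with an embedding of $\M\setminus C$ into $\M'$. The paper simply asserts that ``$\M\setminus\im(\phi_0)$ embeds into $\M'$ via the inclusion $i$'' without further detail, whereas your explicit formula $\psi(s,v)=(|v|\cdot s,\,v/|v|)$ actually supplies and verifies this embedding; so your version is, if anything, more complete on that point.
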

\begin{proof}
  Let $a:C_*(K)\to \Csing(\M)$ be a homological almost-embedding from $K$ to $\M$. We can assume that $A:=\bigcup_{\omega\in K}\supp(a(\omega))$ is a simplicial subcomplex inside some triangulation of $\M$. Let $\M'$ be a manifold produced by a $p$-surgery from $\M$. That is, there exists an embedding $\phi:\mathbb{S}^p\times\mathbb{D}^q\to \M$, with  $q = 2k-p$, such that $\M'= \pth{M\backslash \mathring{\pth{\im(\phi)}}}\cup_{f_{|\mathbb{S}^p\times\mathbb{S}^{q-1}}}\pth{\mathbb{D}^{p+1}\times\mathbb{S}^{q-1}}$. We define $\phi_0:\mathbb{S}^p\to\M$ by $\phi_0(x)=\phi(x,0)$ (which we can assume PL) and by the (PL) general position theorem for embeddings~\cite[$\S$5.3]{Rourke1972IntroductionTP}, there is a homeomorphism $h:\M\to\M$ (a perturbation of the identity) such that $dim\pth{h( A)\cap im(\phi_0)}\leq p+k-2k<0$. The chain map $b=h_{\#}\circ a$ is a homological almost-embedding from $K$ to $\M\setminus im(\phi_0)$. Since $\M\setminus im(\phi_0)$ embeds into $\M'$ via the inclusion $i$, the same chain map the chain map $i_\#\circ b$ shows that $K$ is a homological minor of $\M'$.
\end{proof}

A manifold is called \emph{stably parallelizable} or a \emph{$\pi$-manifold} if its tangent bundle is stably parallelizable, meaning that its sum with some trivial bundle is trivial (see~\cite{milnor}). Milnor proved~\cite[Corollary to Theorem~$2$]{milnor} that any compact triangulated differentiable $\pi$-manifold of dimension $2k$ can be made $(k-1)$-connected by a finite sequence of $p$-surgeries with $p<k$. The same result holds~\cite[Corollary to Theorem~$3$]{milnor} for any compact triangulated differentiable manifold that is $(k-1)$-parallelizable (i.e, if its tangent bundle restricted to its $(k-1)$-skeleton is trivial). Lemma~\ref{l:surgery-stability-minors} ensure that these surgeries preserve $k$-dimensional homological minors, so we get the following generalization of Theorem~\ref{t:forbidden}:

\begin{theorem}\label{t:pi-manifold}
  Let $\M$ be a compact PL manifold (possibly with boundary) of dimension $d$. If $\M$ is a $\pi$-manifold or is $(k-1)$-parallelizable, then there exists $N$ such that $\Delta_N^{\pth{\lceil \frac{d}2\rceil}}$ is not a homological minor of $\M$.
\end{theorem}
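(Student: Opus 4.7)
The plan is to reduce Theorem~\ref{t:pi-manifold} to Theorem~\ref{t:forbidden} by using Milnor's surgery theorems to replace $\M$ with a $(k-1)$-connected manifold $\M'$ of the same dimension, where $k=\lceil d/2\rceil$. Since surgeries of index $p<k$ preserve $k$-dimensional homological minors by Lemma~\ref{l:surgery-stability-minors}, any such minor of $\M$ remains a minor of $\M'$, and Theorem~\ref{t:forbidden} applied to $\M'$ will yield the desired value of $N$.

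First I would reduce the odd-dimensional case to the even one by the cylinder trick used at the start of the proof of Theorem~\ref{t:forbidden}. When $d=2k-1$, the cylinder $\M\times[0,1]$ is a compact PL $2k$-manifold with boundary, its tangent bundle is the direct sum of the pullback of $T\M$ with a trivial line bundle, so it inherits the $\pi$-manifold (respectively $(k-1)$-parallelizable) property from $\M$. Since any homological minor of $\M$ is a homological minor of $\M\times[0,1]$ via the inclusion $x\mapsto(x,0)$, it is enough to treat the case $d=2k$.

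Assuming $d=2k$, I would apply the cited corollaries of Milnor to obtain a finite sequence of $p$-surgeries with $p<k$ transforming $\M$ into a $(k-1)$-connected compact $2k$-manifold $\M'$. Compactness gives that $b\eqdef\beta_k(\M';\Z_2)$ is finite. Iterating Lemma~\ref{l:surgery-stability-minors} along the sequence shows that every $k$-dimensional simplicial complex which is a homological minor of $\M$ is also a homological minor of $\M'$. Setting $N\eqdef N(2k,b)$ as provided by Theorem~\ref{t:forbidden}, we conclude that $\Delta_N^{(k)}$ is not a homological minor of $\M'$, and therefore not of $\M$.

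The main technical point to be careful about is the compatibility between the PL and smooth categories: Milnor's surgery theorems are stated for triangulated differentiable manifolds, whereas Theorem~\ref{t:pi-manifold} is phrased for compact PL manifolds. The surgery construction itself transfers to the PL category, and in low dimensions PL and smooth structures coincide, so this is really a bookkeeping issue rather than a genuine obstacle, but it should be addressed explicitly (either by invoking a PL version of Milnor's surgery theorem or by appealing to smoothing theory). Once that is settled, the proof is just the concatenation of Milnor's surgery step, Lemma~\ref{l:surgery-stability-minors}, and Theorem~\ref{t:forbidden}, preceded by the odd-to-even cylinder reduction.
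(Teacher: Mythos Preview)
Your proposal is correct and follows essentially the same route as the paper: apply Milnor's surgery corollaries to make the manifold $(k-1)$-connected, observe via Lemma~\ref{l:surgery-stability-minors} that $k$-dimensional homological minors survive the surgeries, and then invoke Theorem~\ref{t:forbidden}. The paper presents this as a one-paragraph deduction and does not spell out the odd-to-even cylinder reduction (relying instead on Milnor's result being available in all dimensions as cited), nor does it comment on the PL/smooth category issue you raise; your noting both points is a reasonable bit of extra care rather than a different argument.
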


\noindent
For example, $\mathbb{T}^d$ is parallelizable for every $d \ge 1$, so for $d \ge 3$ the manifold $\mathbb{T}^d$ is covered by Theorem~\ref{t:pi-manifold} (but not by Theorem~\ref{t:forbidden}). Also, any closed orientable $3$-manifold is parallelizable, and any product of stably parallelizable manifolds is stably parallelizable. Examples of manifolds that are covered by neither Theorem~\ref{t:forbidden} nor Theorem~\ref{t:pi-manifold} is $\mathbb{RP}^d$ for even values of $d$.

\section{Convexity spaces VS set systems}\label{app:convspaces}

In this appendix, we provide for completeness a discussion of why the results of Levi~\cite{Levi_radon_implies_helly} and Holmsen and Lee~\cite{Holmsen2021} stated for convexity spaces hold in the context of set systems. See also the discussion in the v3 of Patakova's preprint~\cite[Appendix~A]{Pat_kov__2024}.

\bigskip

A common setting for studying generalizations of convexity is to consider a set $X$ and a set $\C$ of subsets of $X$ satisfying three properties:

\begin{description}
\item[(i)] $\emptyset$ and $X$ are in $\C$,
\item[(ii)] $\C$ is closed under intersections, and
\item[(iii)] $\C$ is closed under unions of nested families.
\end{description}

\noindent
A pair $(X,\C)$ satisfying all three properties is called a \defn{convexity space}. These structures were extensively investigated by associating to a general convexity space various parameters such as its Helly number, its Radon number, etc. and establishing relations between these parameters that hold for arbitrary convexity spaces. This includes for instance Levi's inequality between Helly and Radon numbers~\cite{Levi_radon_implies_helly}.

\medskip

Let $\F$ be a set system with ground set $X$. The \defn{$\F$-convex hull} $\conv_{\F}(P)$ of a subset $P\subset X$ is the intersection of all the members of $\F$ that contain $P$. When $\F$ is not a convexity space, we do not have that $\conv_{\F}(P)$ is in $\F$ but this is of no concern to us.

\begin{itemize}
\item The \defn{Helly number} $\h_\F$ of $\F$ is the smallest integer $h$ with the following property: If in a finite subfamily $\G \subset\F$, every $h$ members of $\G$ intersect, then $\G$ has nonempty intersection. If no such $h$ exists, we set $\h_\F=\infty$.\medskip

\item The \defn{Radon number} $\rad_\F$ of $\F$ is the smallest integer $r$ such that every $r$-element subset $S\subset X$ can be partitioned into two nonempty parts $S=P_1\sqcup P_2$ such that $\conv_{\F}(P_1) \cap \conv_{\F}(P_2) \neq \emptyset$. If no such $r$ exists, we set $\rad_\F=\infty$.
\end{itemize}

\noindent
Note that both definitions make sense also without axioms~(i)--(iii) of convexity spaces. Here is a proof of Levi's inequality:

\begin{lemma}
  For every set system $\F$ with finite Radon number, $\h_\F \le \rad_\F-1$.
\end{lemma}
\begin{proof}
  Let $\F$ be a set system and let $X$ denote its ground set. Let $\h(\F)$ denote the set of all subsets $W \subseteq \F$ with the following properties: $W$ has empty intersection and every proper subset of $W$ has nonempty intersection. Let $|\cdot|$ denotes the cardinality. Note that:

  \begin{itemize}
  \item Every $W \in \h(\F)$ refutes the possibility that the Helly number of $\F$ be at most $|W|-1$.\medskip
    
  \item Conversely, these are the inclusion-minimal obstructions: For any $k \in \N$ and finite subset $\G \subset\F$, if every $k$ members of $\G$ intersect and $\G$ has empty intersection, then there exists $W \subseteq \G$ such that $W \in \h(\F)$ and $|W| \ge k+1$.\medskip
  \end{itemize}

  \noindent
  It follows that $\h_\F$ is finite if and only if $\h(\F)$ is finite, and $\h_\F = \max_{W \in \h(\F)} |W| - 1$.
  
  \medskip

  Let $W \in \h(\F)$. For every $A \in W$ let $x_A \in \cap_{B \in W\setminus\{A\}} B$. Let $S = \{x_A \colon A \in W\}$. Consider an arbitrary partition $S=P_1\sqcup P_2$. 
  For $i=1,2$ define $W_i = \{A \in W \colon x_A \notin P_i\}$. 

  Observe that $W = W_1 \cup W_2$ and that $W_i$ contains $\{A \in W \colon P_i \subset A\}$. It follows that

  \[\begin{aligned}
  \conv_{\F}(P_1) \cap \conv_{\F}(P_2) & = \pth{\bigcap_{\stackrel{A \in \F}{P_1 \subset A}}A} \cap \pth{\bigcap_{\stackrel{A \in \F}{P_2 \subset A}}A}\\
  & \subset \underbrace{\pth{\bigcap_{\stackrel{A \in W}{P_1 \subset A}}A}}_{\subset\bigcap_{A \in W_1}A} \cap \underbrace{\pth{\bigcap_{\stackrel{A \in W}{P_2 \subset A}}A}}_{\subset\bigcap_{A \in W_2}A} \subset \pth{\bigcap_{A \in W}A} = \emptyset
  \end{aligned}\]

  \noindent
  Thus, there is no partition $S=P_1\sqcup P_2$ such that $\conv_{\F}(P_1) \cap \conv_{\F}(P_2) \neq \emptyset$. It follows that $\rad_\F \ge |W|$. Since this holds for every $W \in \h(\F)$, we have $\rad_\F \ge \max_{W \in \h(\F)}|W| = \h_\F+1$.
\end{proof}

The proof of our Corollary~\ref{c:itgrows!} uses Theorem~1.1 of Holmsen and Lee~\cite{Holmsen2021}, which is stated for convexity spaces. The proof of~\cite[Theorem~1.1]{Holmsen2021} uses two ingredients:

\begin{enumerate}
\item a colorful Helly theorem for convexity spaces with bounded Radon number~\cite[Theorem~2.2]{Holmsen2021}, where convexity spaces are only used to apply Levi's inequality and an inequality of Jamison~\cite[Ineq.~(6)]{jamison1981partition} on partition numbers, and\medskip
  
\item a lower bound on the size of the largest clique in a uniform hypergraph with a forbidden pattern~\cite[Theorem~1.2]{Holmsen2020}.
\end{enumerate}

\noindent
This only leaves us with Jamison's inequality to examine. Jamison's inequality is about analogues of Radon numbers for more than $2$ parts, but dealing with multisets.

\medskip

Let $\F$ be a set system with ground set $X$. Given a multiset $S$ of $X$ we write $\bar S$ for the set of elements that appear in $S$, and define $\conv_\F(S) := \conv_\F(\bar S)$. A \defn{partition} of a multiset $S$ into $k$ parts is a union of $k$ multisets $(S_1, S_2, \ldots, S_k)$ such that $\bar S = \bar S_1 \cup \bar S_2 \cup \ldots \cup \bar S_k$ and the multiplicity of an element in $S$ equals the sum of its multiplicities in $S_1, S_2, \ldots, S_k$. 

\begin{itemize}
\item For $k \ge 2$, the \defn{$k$th partition number} $\rad^{(k)}_\F$ of $\F$ is the smallest integer $r$ such that every multiset $S$ of $X$ of size $r$ can be partitioned into $k$ nonempty parts $(P_1, P_2, \ldots, P_k)$ such that $\conv_{\F}(P_1) \cap \conv_{\F}(P_2) \cap \ldots  \cap \conv_{\F}(P_k)\neq \emptyset$. If no such $r$ exists, we set $\rad_\F^{(k)}=\infty$.
\end{itemize}

\noindent
In particular $\rad^{(2)}_\F = \rad_\F$. Jamison's inequality asserts that when $\F$ is a convexity space, $\rad^{(k)}_\F \le {\rad^{(k)}_\F}^{\log_2 k}$. It follows from a recursive formula~\cite[Theorem~1]{jamison1981partition} whose proof uses no property of convexity spaces. For completeness, we reproduce here the proof of a slightly simpler version of that inequality:

\begin{lemma}
  For any set system $\F$ and any $m,n \in \N$ we have $\rad^{(mn)}_\F \le \rad^{(m)}_\F\rad^{(n)}_\F$.
\end{lemma}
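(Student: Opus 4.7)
Let $r_m = \rad^{(m)}_\F$ and $r_n = \rad^{(n)}_\F$; we may assume both are finite, since otherwise the inequality is trivial. Given a multiset $S$ of the ground set $X$ with $|S| = r_m r_n$, the plan is to build a partition of $S$ into $mn$ nonempty parts whose $\F$-convex hulls have a common point; by definition of $\rad^{(mn)}_\F$ this will yield the desired inequality.

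First, I split $S$ arbitrarily into $r_n$ sub-multisets $T_1, \ldots, T_{r_n}$, each of size $r_m$. Applying the definition of $\rad^{(m)}_\F$ to each $T_i$ yields a partition $T_i = U_{i,1} \sqcup \cdots \sqcup U_{i,m}$ into nonempty multisets together with a point $x_i \in \bigcap_{j=1}^m \conv_{\F}(U_{i,j})$. Next, I view $Y = \{x_1, \ldots, x_{r_n}\}$ as a multiset of size $r_n$ (tagging each occurrence by its originating index $i$) and apply the definition of $\rad^{(n)}_\F$ to partition $Y$ into $n$ nonempty multisets $V_1, \ldots, V_n$ with $\bigcap_{k=1}^n \conv_\F(V_k) \neq \emptyset$. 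This partition induces a map $k(\cdot) : [r_n] \to [n]$ recording which part each tagged occurrence $x_i$ lands in.

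Now define $W_{k,j} := \bigsqcup_{i:\, k(i)=k} U_{i,j}$ for $(k,j) \in [n]\times[m]$. Since $\bigsqcup_{(k,j)} W_{k,j} = \bigsqcup_i T_i = S$, these $mn$ multisets partition $S$; each $V_k$ is nonempty, so some $i$ has $k(i)=k$, and then $W_{k,j} \supseteq U_{i,j}$ is nonempty. For any such $i$, monotonicity of $\conv_\F(\cdot)$ gives $x_i \in \conv_\F(U_{i,j}) \subseteq \conv_\F(W_{k,j})$, hence $\bar V_k \subseteq \conv_\F(W_{k,j})$; since $\conv_\F(W_{k,j})$ is itself an intersection of members of $\F$, this forces $\conv_\F(V_k) \subseteq \conv_\F(W_{k,j})$. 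Intersecting first over $j$ and then over $k$ yields
\[
  \emptyset \;\neq\; \bigcap_{k=1}^n \conv_\F(V_k) \;\subseteq\; \bigcap_{(k,j)\in [n]\times[m]} \conv_\F(W_{k,j}),
\]
which is exactly what we need.

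The only real obstacle is notational rather than mathematical: one must track multiset bookkeeping carefully, in particular distinguishing distinct occurrences of a repeated $x_i$ in $Y$ so that the partition of $Y$ unambiguously determines the assignment $k(\cdot)$, and verifying that the multiset identity $\bigsqcup_{(k,j)} W_{k,j} = S$ holds with correct multiplicities. Once this is set up, monotonicity and idempotence of $\conv_\F(\cdot)$ together with the choice of the $x_i$ do all the work.
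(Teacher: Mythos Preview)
Your argument is correct and is essentially the same two-level Radon argument as the paper's: split $S$ into equal blocks, apply one partition number inside each block to produce witness points, then apply the other partition number to the multiset of witnesses and pull the resulting partition back to $S$. The only cosmetic difference is that you and the paper interchange the roles of $m$ and $n$ (you make $r_n$ blocks of size $r_m$, the paper makes $r_m$ blocks of size $r_n$); your write-up is also a bit more explicit about the monotonicity step $\conv_\F(V_k)\subseteq\conv_\F(W_{k,j})$ and about tagging repeated witness points, which is a welcome clarification.
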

\begin{proof}
  Let $X$ be the ground set of $\F$ and let $S$ be a multiset of size $\rad^{(m)}_\F\rad^{(n)}_\F$ of $X$. Fix any partition of $S$ into $k = \rad^{(m)}_\F$ multisets $(S_1, S_2, \ldots, S_k)$, each of size $\rad^{(n)}_\F$. Each $S_i$ admits a partition $(P^i_1, P^i_2, \ldots, P^i_n)$ such that $\conv_{\F}(P^i_1) \cap \conv_{\F}(P^i_2) \cap \ldots  \cap \conv_{\F}(P^i_n)\neq \emptyset$. Let $p_i$ be an element in that intersection. Now let $P$ denote the multiset $\{p_1, p_2, \ldots, p_k\}$. Since $k = \rad^{(m)}_\F$ there is a partition of $P$ into $m$ multisets $(P_1,P_2, \ldots, P_m)$ and an element $x \in X$ such that $x \in \conv_{\F}(P_1) \cap \conv_{\F}(P_2) \cap \ldots  \cap \conv_{\F}(P_m)$. Now, for every $a \im [m]$ let $A(a) = \{i \colon p_i \in P_a\}$. For any $(a,b) \in [m]\times [n]$ we put $Q_{a,b} := \cup_{i \in A(a)} P^i_b$. The $Q_{a,b}$ therefore form a partition of $S$ into $mn$ multisets. Moreover, since for all $i$ and $b$, $p_i\in \conv_{\F}(P_b^i)$, we have that each $\conv_\F(Q_{a,b})$ contains every $p_i$ with $i \in A(a)$, and therefore also contains $x$.
\end{proof}

\bigskip

Finally, we note that the proofs by Alon et al.~\cite{transversal-hypergraph} that a Fractional Helly theorem implies the existence of weak $\varepsilon$-nets, $(p,q)$-theorems, etc. does require the set system to be closed under intersection.

\section{An upper bound on the growth of graded Radon numbers}\label{app:rab}

A pigeonhole argument yields that for every set system $\F$ the graded Radon numbers satisfy $\rad_\F(t)< 2^t$. Indeed, the (graded) Radon numbers do not change when we identify two elements in the ground set of $\F$ that belong to exactly the same members of $\F$. It turns out that a much sharper bound holds.

\begin{proposition}
 For any set system $\F$ and any $t \in \N$, we have $\rad_{\mathcal{F}}(t)\leq t+1$.
\end{proposition}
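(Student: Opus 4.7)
The plan is to establish the pointwise bound $\rad_{\F'}\le |\F'|+1$ for every finite subfamily $\F'\subseteq \F$ with $|\F'|\ge 1$; taking the supremum over $\F'$ of size at most $t$ then yields $\rad_\F(t)\le t+1$. Concretely, I fix $\F'\subseteq \F$ with $s=|\F'|\ge 1$, take an arbitrary subset $S$ of the ground set with $|S|=s+1$, and produce an explicit Radon partition $S=P_1\sqcup P_2$ with $\conv_{\F'}(P_1)\cap \conv_{\F'}(P_2)\neq \emptyset$. A one-line monotonicity observation---$P_i\supseteq P_i'$ implies $\conv_{\F'}(P_i)\supseteq \conv_{\F'}(P_i')$---propagates Radon partitions from $(s+1)$-subsets to all larger subsets of the ground set, which is exactly what the definition of $\rad_{\F'}$ asks for.

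The construction is a single pigeonhole step. For each $F\in \F'$, call $y\in S$ an \emph{$F$-critical point} if $F\cap S=S\setminus\{y\}$; clearly there is at most one $F$-critical point per $F$, so at most $s$ elements of $S$ are $F$-critical for some $F\in \F'$. Since $|S|=s+1$, there is some $y\in S$ that is $F$-critical for no $F\in \F'$. I set $P_1=\{y\}$ and $P_2=S\setminus\{y\}$; both parts are nonempty.

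I claim $y\in \conv_{\F'}(P_1)\cap \conv_{\F'}(P_2)$. The first membership is immediate from $y\in P_1$. For the second, let $F\in \F'$ with $P_2\subseteq F$: then $F\cap S\supseteq S\setminus\{y\}$, so either $S\subseteq F$ (whence $y\in F$) or $F\cap S=S\setminus\{y\}$; the latter would make $y$ $F$-critical, contradicting the choice of $y$. Hence $y\in F$ in every case, so $y\in \conv_{\F'}(P_2)$ and the Radon condition is verified.

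I do not anticipate any serious obstacle: the whole proof is a one-shot pigeonhole plus some bookkeeping on the asymmetric definition of $\conv_{\F'}$. The only point worth being explicit about is the extension from the size $s+1$ case to all larger subsets via the monotonicity of $\conv_{\F'}$, which is what allows the bound on $\rad_{\F'}$ to be reduced to the existence of a Radon partition for a single subset of size $|\F'|+1$.
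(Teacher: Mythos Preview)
Your proof is correct and follows essentially the same pigeonhole idea as the paper: both arguments hinge on the observation that each $F\in\F'$ can ``miss'' at most one point of an $(|\F'|+1)$-element set $S$, so some $y\in S$ is missed by no $F$ containing $S\setminus\{y\}$, making $(\{y\},S\setminus\{y\})$ a Radon partition. Your direct presentation is in fact slightly cleaner than the paper's contradiction argument, which first pins down each $F\in\F'$ as critical for one of the first $t$ points before exhibiting the same partition for the leftover point.
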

\begin{proof}
  Suppose there exists $\mathcal{F}' \subseteq \F$ of size $t$ of Radon number greater than $t+1$, that is with some set $S=\{p_1,\dots,p_{t+1}\}$ of $t+1$ points in $X$ with no $\F'$-Radon partition.

  \medskip

  Observe that if a partition $\mathcal{P}=(\mathcal{P}_0,\mathcal{P}_1)$ of $\{p_1,\dots,p_{t+1}\}$ is not a Radon partition, then the $\F'$-hull of $\mathcal{P}_0$ does not contain any member of $\mathcal{P}_1$. In particular, there exists some $F_i\in\F'$ that contains $\mathcal{P}_0$ but not $\mathcal{P}_1$.

  \medskip

  Let us apply this observation to the partitions $\mathcal{P}^{(j)}=(\{p_1,\dots,p_{t+1}\}\backslash \{p_j\},\{p_j\})$ for $j\in\{1,\dots,t\}$. For every $j \in [t]$ there exists $F_{i_j}\in\mathcal{F}'$ such that $F_{i_j} \cap S = S \setminus \{p_j\}$. Note that the $\{F_{i_j}\}_{j \in [t]}$ must be pairwise disjoint, that is $\F' = \{F_{i_j}\}_{j \in [t]}$.

  \medskip

  To conclude, observe that no member of $\F'$ contains $\{p_1,p_2, \ldots, p_t\}$. We thus have
  \[ \conv_{\mathcal{F}'}(\{p_1,\dots,p_t\})=X,\]
  which contains $p_{t+1}$. Altogether, $(\{p_1,\dots,p_t\},\{p_{t+1}\})$ is a Radon partition, which contradicts the initial assumption.
\end{proof}


\begin{thebibliography}{10}

\bibitem{transversal-hypergraph}
Noga Alon, Gil Kalai, Jir{\'\i} Matou\v{s}ek, and Roy Meshulam.
\newblock Transversal numbers for hypergraphs arising in geometry.
\newblock {\em Advances in Applied Mathematics}, 29:79--101, 2002.

\bibitem{bajmoczy1979common}
Ervin~G. Bajm{\'o}czy and Imre B{\'a}r{\'a}ny.
\newblock On a common generalization of borsuk's and radon's theorem.
\newblock {\em Acta Mathematica Hungarica}, 34(3-4):347--350, 1979.

\bibitem{barany2021combinatorial}
Imre B{\'a}r{\'a}ny.
\newblock {\em Combinatorial convexity}, volume~77.
\newblock American Mathematical Soc., 2021.

\bibitem{barany2018tverberg}
Imre B{\'a}r{\'a}ny and Pablo Sober{\'o}n.
\newblock Tverberg’s theorem is 50 years old: a survey.
\newblock {\em Bulletin of the American Mathematical Society}, 55(4):459--492,
  2018.

\bibitem{basu2018multi}
Saugata Basu and Anthony Rizzie.
\newblock Multi-degree bounds on the betti numbers of real varieties and
  semi-algebraic sets and applications.
\newblock {\em Discrete \& Computational Geometry}, 59:553--620, 2018.

\bibitem{bukh2010radon}
Boris Bukh.
\newblock Radon partitions in convexity spaces, 2010.

\bibitem{chakraborty2018helly}
Sourav Chakraborty, Rameshwar Pratap, Sasanka Roy, and Shubhangi Saraf.
\newblock Helly-type theorems in property testing.
\newblock {\em International Journal of Computational Geometry \&
  Applications}, 28(04):365--379, 2018.

\bibitem{de2019discrete}
Jes{\'u}s De~Loera, Xavier Goaoc, Fr{\'e}d{\'e}ric Meunier, and Nabil Mustafa.
\newblock The discrete yet ubiquitous theorems of carath{\'e}odory, helly,
  sperner, tucker, and tverberg.
\newblock {\em Bulletin of the American Mathematical Society}, 56(3):415--511,
  2019.

\bibitem{de2017direct}
{\'E}ric~Colin de~Verdi{\`e}re, Vojt{\v{e}}ch Kalu{\v{z}}a, Pavel Pat{\'a}k,
  Zuzana Pat{\'a}kov{\'a}, and Martin Tancer.
\newblock A direct proof of the strong hanani-tutte theorem on the projective
  plane.
\newblock {\em Journal of Graph Algorithms and Applications}, 21(5):939--981,
  2017.

\bibitem{ECKHOFF200061}
Jürgen Eckhoff.
\newblock The partition conjecture.
\newblock {\em Discrete Mathematics}, 221(1):61--78, 2000.

\bibitem{epstein_et_al:LIPIcs.ICALP.2020.98}
Rogers Epstein and Sandeep Silwal.
\newblock {Property Testing of LP-Type Problems}.
\newblock In Artur Czumaj, Anuj Dawar, and Emanuela Merelli, editors, {\em 47th
  International Colloquium on Automata, Languages, and Programming (ICALP
  2020)}, volume 168 of {\em Leibniz International Proceedings in Informatics
  (LIPIcs)}, pages 98:1--98:18, 2020.

\bibitem{fulek2019counterexample}
Radoslav Fulek and Jan Kyn{\v{c}}l.
\newblock Counterexample to an extension of the hanani-tutte theorem on the
  surface of genus 4.
\newblock {\em Combinatorica}, 39(6):1267--1279, 2019.

\bibitem{steppingUp}
Xavier Goaoc, Andreas~F. Holmsen, and Zuzana Patáková.
\newblock Intersection patterns in spaces with a forbidden homological minor,
  2024.

\bibitem{hellyBetti}
Xavier Goaoc, Pavel Pat{\'a}k, Zuzana Pat{\'a}kov{\'a}, Martin Tancer, and Uli
  Wagner.
\newblock Bounding {H}elly numbers via {B}etti numbers.
\newblock In {\em A Journey Through Discrete Mathematics: A Tribute to
  Ji{\v{r}}{\'i} Matou{\v{s}}ek}, pages 407--447. Springer, 2017.

\bibitem{helly1930systeme}
Eduard Helly.
\newblock {\"U}ber {S}ysteme von abgeschlossenen {M}engen mit
  gemeinschaftlichen {P}unkten.
\newblock {\em Monatshefte f{\"u}r Mathematik und Physik}, 37:281--302, 1930.

\bibitem{Holmsen2020}
Andreas~F. Holmsen.
\newblock Large cliques in hypergraphs with forbidden substructures.
\newblock {\em Combinatorica}, 40(4):527--537, Aug 2020.

\bibitem{Holmsen2021}
Andreas~F. Holmsen and Donggyu Lee.
\newblock Radon numbers and the fractional {H}elly theorem.
\newblock {\em Israel Journal of Mathematics}, 241(1):433--447, Mar 2021.

\bibitem{jamison1981partition}
Robert Jamison.
\newblock Partition numbers for trees and ordered sets.
\newblock {\em Pacific Journal of Mathematics}, 96(1):115--140, 1981.

\bibitem{kalai1984intersection}
Gil Kalai.
\newblock Intersection patterns of convex sets.
\newblock {\em Israel Journal of Mathematics}, 48(2):161--174, 1984.

\bibitem{KM}
Gil Kalai.
\newblock Combinatorial expectations from commutative algebra.
\newblock In I.~Peeva anv V.~Welker, editor, {\em Combinatorial Commutative
  Algebra}, volume 1(3), pages 1729--1734. Oberwolfach Reports, 2004.

\bibitem{katchalski1979problem}
Meir Katchalski and Andy Liu.
\newblock A problem of geometry in {$\mathbb{R}^n$}.
\newblock {\em Proceedings of the American Mathematical Society},
  75(2):284--288, 1979.

\bibitem{Levi_radon_implies_helly}
Friedrich~W. Levi.
\newblock On {H}elly's theorem and the axioms of convexity.
\newblock {\em Journal of the Indian Mathematical Society}, 15:65--76, 1951.

\bibitem{matousek2013lectures}
Ji{\v r}i Matou{\v s}ek.
\newblock {\em Lectures on discrete geometry}, volume 212.
\newblock Springer Science \& Business Media, 2013.

\bibitem{m-httucs-97}
Jir{\'\i} Matou\v{s}ek.
\newblock A {H}elly-type theorem for unions of convex sets.
\newblock {\em Discrete \& Computational Geometry}, 18:1--12, 1997.

\bibitem{msw-sblp-92}
Jir{\'\i} Matou\v{s}ek, Micha Sharir, and Emo Welzl.
\newblock A subexponential bound for linear programming.
\newblock {\em Algorithmica}, 16(4-5):498--516, 1996.

\bibitem{matousek2004bounded}
Jir{\'\i} Matou{š}ek.
\newblock Bounded {VC}-dimension implies a fractional {H}elly theorem.
\newblock {\em Discrete \& Computational Geometry}, 31:251--255, 2004.

\bibitem{milnor}
John Milnor.
\newblock A procedure for killing homotopy groups of differentiable manifolds.
\newblock In {\em Proceedings of symposia in pure mathematics}, pages 39--55.
  American Mathematical Society, 1961.

\bibitem{patak2025sharper}
Pavel Pat{\'a}k.
\newblock A sharper ramsey theorem for constrained drawings.
\newblock {\em Journal of Graph Theory}, 109(4):401--411, 2025.

\bibitem{patak-tancer}
Pavel Pat{\'a}k and Martin Tancer.
\newblock Embeddings of k-complexes into 2 k-manifolds.
\newblock {\em Discrete \& Computational Geometry}, 71(3):960--991, 2024.

\bibitem{patakova2022bounding}
Zuzana Patáková.
\newblock Bounding {R}adon number via {B}etti numbers.
\newblock {\em International Mathematics Research Notices},
  2024(11):9482--9500, 2024.

\bibitem{Pat_kov__2024}
Zuzana Patáková.
\newblock Bounding radon numbers via betti numbers, 2024.

\bibitem{pelsmajer2009strong}
Michael~J Pelsmajer, Marcus Schaefer, and Despina Stasi.
\newblock Strong hanani--tutte on the projective plane.
\newblock {\em SIAM Journal on Discrete Mathematics}, 23(3):1317--1323, 2009.

\bibitem{prasolov2007elements}
Victor~V. Prasolov.
\newblock {\em Elements of homology theory}, volume~81.
\newblock American Mathematical Society, 2007.

\bibitem{ringel2012map}
Gerhard Ringel.
\newblock {\em Map color theorem}, volume 209.
\newblock Springer Science \& Business Media, 2012.

\bibitem{Rourke1972IntroductionTP}
Colin Rourke and Brian Sanderson.
\newblock {\em Introduction to Piecewise-Linear Topology}.
\newblock Springer Study Edition. Springer Berlin, Heidelberg, 1972.

\bibitem{skopenkov}
Arkadiy Skopenkov.
\newblock Embeddings of $k$-complexes in $2k$-manifolds and minimum rank of
  partial symmetric matrices, 2024.

\bibitem{tancer2012intersection}
Martin Tancer.
\newblock Intersection patterns of convex sets via simplicial complexes: a
  survey.
\newblock In {\em Thirty essays on geometric graph theory}, pages 521--540.
  Springer, 2012.

\bibitem{wagner2012minors}
Uli Wagner.
\newblock Minors, embeddability, and extremal problems for hypergraphs.
\newblock In {\em Thirty Essays on Geometric Graph Theory}, pages 569--607.
  Springer, 2012.

\bibitem{zeeman1964relative}
Eric~Christopher Zeeman.
\newblock Relative simplicial approximation.
\newblock In {\em Mathematical Proceedings of the Cambridge Philosophical
  Society}, volume~60, pages 39--43. Cambridge University Press, 1964.

\end{thebibliography}
\end{document}